\newtheorem{theorem}{Theorem}
\newtheorem{Lemma}{Lemma}
\newtheorem{corollary}{Corollary}
\newtheorem{definition}{Definition}
\newtheorem{remark}{Remark} 
\def\BibTeX{{\rm B\kern-.05em{\sc i\kern-.025em b}\kern-.08em
    T\kern-.1667em\lower.7ex\hbox{E}\kern-.125emX}}
\begin{document}

\title{Multi-Source Peak Age of Information Optimization in Mobile Edge Computing Systems}
 \author{Jianhang~Zhu and Jie~Gong,~\IEEEmembership{Member,~IEEE}
 \IEEEcompsocitemizethanks
 {\IEEEcompsocthanksitem J. Zhu and J. Gong are with the School of Computer Science and Engineering, and the Guangdong Key Laboratory of Information Security Technology, Sun Yat-sen University, Guangzhou 510006, China. Emails: zhujh26@mail2.sysu.edu.cn, gongj26@mail.sysu.edu.cn.}
 }


\maketitle
\begin{abstract}
  Age of Information (AoI) is emerging as a novel metric for measuring information freshness in real-time monitoring systems. For computation-intensive status data, the information is not revealed until being processed. We consider a status update problem in a multi-source single-server system where the sources are scheduled to generate and transmit status data which are received and processed at the edge server. Generate-at-will sources with both random transmission time and process time are considered, introducing the joint optimization of source scheduling and status sampling on the basis of transmission-computation balancing. We show that a random scheduler is optimal for both non-preemptive and preemptive server settings, and the optimal sampler depends on the scheduling result and its structure remains consistent with the single-source system, i.e., threshold-based sampler for non-preemptive case and transmission-aware deterministic sampler for preemptive case. { Then, the problem can be transformed to jointly optimizing the scheduling frequencies and the sampling thresholds/functions, which is non-convex. We proposed an alternation optimization algorithm to solve it. Numerical experiments show that the proposed algorithm can achieve the optimal in a wide range of settings.}
\end{abstract}

\begin{IEEEkeywords}
Age of information, multi-source, transmission-computation trade-off, mobile edge computing, service preemption
\end{IEEEkeywords}

\section{Introduction}

Recently, shifting cloud functions to network edges such as base stations and access points, is a new trend in mobile computing to utilize the vast amount of idle computation power and storage space for computation-intensive and latency-critical tasks of mobile devices. This trend is known as Mobile Edge Computing (MEC) \cite{mao2017survey}. It provides low-latency services, and combined with the extensive update collection capabilities of the Internet of Things (IoT) \cite{IoT2015}, it enables various real-time applications such as remote monitoring and control, phase packet update in smart grids, and environment monitoring for autonomous driving. The performance of these systems is closely tied to the information freshness. The concept of \emph{Age of Information} (AoI), defined as the time elapsed since the generation of the last received update, was introduced in \cite{kaul2012real} to quantify information freshness. {For status monitoring applications with computation-intensive updates such as image detection, the status information embedded in an update packet cannot be exposed until it is being processed. Hence, it is essential to jointly optimize transmission and computing for AoI minimization.
}

The emergence of AoI has inspired numerous studies that combine queuing theory to analyze and optimize AoI under different queue and server settings\cite{kaul2012real,yate2012real,sun2017update,arafa2020age}. Additionally, optimizing AoI in wireless networks has gained increasing attention in multi-source networks \cite{yates2019theage, yate2012real, huang2015opt, Najm2018status, Moltafet2020onthe, bedewy2021optimal, talak2020opt}, where scheduling strategies need to be designed to serve different sources and optimize the system average AoI. In multi-source MEC systems, both multi-source scheduling and the impact of the computation process needs to be considered, thus necessitating the joint consideration of scheduling and sampling to reduce the system average AoI.

To address this issue, we consider a multi-source status update system with MEC as shown in Fig. \ref{fig:system}. There are \(M\) sources monitoring different time-varying processes. At any time, the sources sample the time-varying processes and generate updates, which are transmitted through a delayed channel to the edge server. The server then processes these updates and sends the results to the destination. The server has a queue to cache the received updates when necessary. Assume transmission time and computation time are both random and unknown a priori. A scheduler-sampler at the edge server decides which source should sample a new data at which time. {The transmit-then-compute process of edge computing can be viewed as a two-hop system. To optimize AoI in MEC systems, it is essential to understand how the actions of two hops interact with each other. It also sets up a foundation for the possible extension to multi-hop systems such as partial offloading in edge computing.
}

We adopt the weighted sum average peak AoI (PAoI) as the performance metric. The sum weights indicate the different levels of interest of the destination to different sources. {PAoI is a meaningful data freshness metric for scenarios that focus on reducing AoI voilation probability \cite{costa2014age, costa2016PAoI}. In these applications, an update is viewed as fresh if its age is within a certain threshold. We focus on the average PAoI as it is important to characterize PAoI.} In addition, the average PAoI is usually more analytically tractable than average AoI and leads to well-structured solutions \cite{cham2021min}. Especially for MEC systems, it is quite challenging to optimize the average AoI, and only heuristic strategies were proposed~\cite{zhu2022online}. Hence, we focus on average PAoI optimization problem and aim to derive well-structured optimal solution.

\begin{figure}[!t]
  \centering
  \includegraphics[width=0.7\linewidth]{./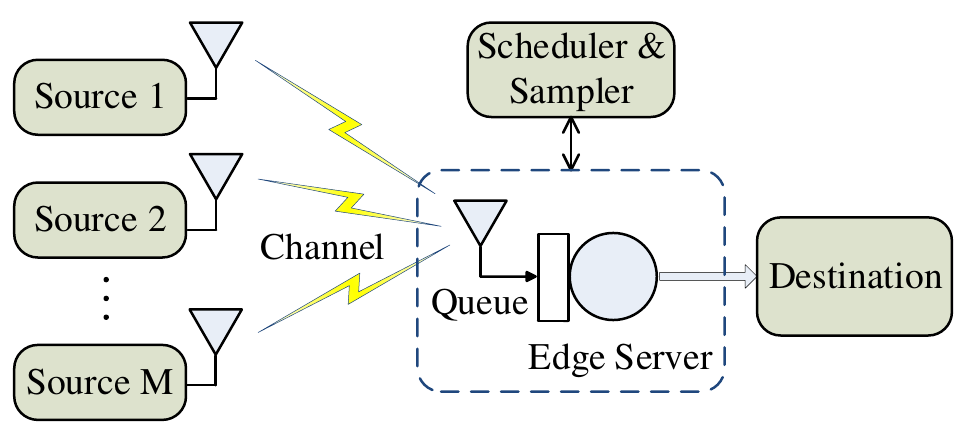}
  \caption{Multi-source status update system with MEC. A scheduler-sampler manages the scheduling of updates from $M$ sources. Updates from different sources are transmitted through a shared channel to the edge server for processing.} \label{fig:system}
\end{figure}

In a single-source MEC system, balancing transmission and computation time is key to optimizing system real-time performance. Our previous work~\cite{jian2024opt} considered the joint optimization of transmission and computation in single-source case. It revealed the structure of optimal samplers for both preemptive and non-preemptive server settings. {However, in multi-source systems, besides balancing transmission and computation processes, scheduling among sources need to be further considered. In particular, updates from one source may delay the processing of others, and thus the AoI performance of one source is affected by the updates of others. This interdependence significantly complicates the AoI analysis and optimization. Therefore, the source scheduling and sampling are coupled among multiple sources, which is the key challenge in multi-source case that distinguishes from the single-source case. This work focuses on dealing with this challenge.}

\subsection{Main Results}

In this paper, we consider the problem of joint transmission and computation optimization in a multi-source system. We aim at minimizing the multi-source weighted sum average PAoI by optimizing the scheduling-sampling policy. We assume that both the transmission time and the computation time are independently and identically distributed (i.i.d.). Sources can sample at any time, referred to as the \emph{generate-at-will} model. We derive a mathematical expression for the weighted sum average PAoI, applicable to both preemptive and non-preemptive edge server settings. The feasible strategy set includes all causal schedulers and samplers, where control decisions depend on the system's history and current information. Our main contributions are as follows:
\begin{itemize}
  \item We prove that the performance of the optimal scheduler only depends on scheduling frequency vector in both non-preemptive and preemptive systems (Theorem~\ref{The:WOP}). Based on this, we define a random scheduler (Definition~\ref{def:RS}), which is optimal in both systems (Corollary~\ref{coro1}). Under the random scheduler, we demonstrate that a continuous working sampler is optimal for both systems (Lemma \ref{lem:maxZ}). We further derive the expressions of the optimal scheduling frequencies for both system settings (Lemma~\ref{lem:optimalf} and Lemma~\ref{lem:optimalf_p}) for given optimal smaplers. 
  \item In a non-preemptive system, we show that the optimal sampler depends on a set of fixed thresholds (Theorem~\ref{The:WOP_FMT}). The problem of computing the optimal thresholds can be decomposed into independent single-threshold optimization problems, each can be solved as a single-source problem. Then, we propose an alternating optimization algorithm to optimize the scheduling frequencies and sampling thresholds (Algorithm~\ref{alg:optimal_policy}).
  \item In a preemptive system, we show that the optimal sampler depends on a function of the data transmission time and the scheduling index (Theorem~\ref{The:WP_SD}). The sampler function can be determined by iteratively optimizing each source-dependent sampler which can be calculated via the Dinkelbach algorithm (Algorithm~\ref{alg:Dinkelbach}).
  \item Numerical experiments validate the analysis results for both systems. We also examined the variation of the optimal sampler parameters under the random scheduler. The results show that when the average transmission time is large, the optimal sampler always degenerates into a zero-wait sampler.
\end{itemize}

\subsection{Related Work}

\textbf{Source Settings:} The Age of Information has become an important metric in various status update systems. Related studies can be classified into two categories based on the type of source. The first category focuses on uncontrollable sources, as seen in works such as \cite{kaul2012real,yate2012real,bt2015age,huang2015opt,bed2019the,yates2019theage}. In these works, minimizing AoI typically involves optimizing the service rate and packet management strategy in the queues. For instance, ref.~\cite{kaul2012real} analyzed the average AoI in the First-Come-First-Serve (FCFS) system for different types of queues and found that there exists an optimal offered load for each queue to minimize the average AoI.  In \cite{yates2019theage}, a new simplified technique for evaluating AoI in finite-state continuous-time queuing systems was derived for the multi-source LCFS system with Poisson arrivals and exponential service time. The second category considers controllable sources, also known as the generate-at-will source model, as seen in \cite{sun2017update,arafa2020age,gu2021opt,cham2021min,gong2022sleep,arafa2021timely}. For example, ref.~\cite{sun2017update} considered a generate-at-will source model where updates can be generated at any time according to a scheduling policy and proved that a threshold-based policy is optimal when considering independent and identically distributed transmission time. However, these works did not consider the impact of computation on AoI.

\textbf{Computing Time:} The impact of computing on AoI has garnered increasing attention~\cite{alabbasi2018joint, arafa2019timely, zou2021opt, kuang2019age, gong2019reducing, zhong2019age, song2019age, li2021age, zhu2022online}, as the information contained in a status update packet is often not revealed until it has been processed. Ref.~\cite{alabbasi2018joint} explored the impact of computing on AoI by scheduling computing tasks in the central cloud. The scheduling policy for cloud computing ignoring transmission time was studied in~\cite{arafa2019timely}. In~\cite{zou2021opt}, the trade-off between computing and transmission was analyzed, where each packet was pre-processed before being transmitted. In~\cite{kuang2019age} and~\cite{gong2019reducing}, the average AoI with exponential transmission time and service time was analyzed for the single-user case when MEC was considered. However, these works did not address the fundamental question of what the optimal scheduling policy is to achieve the minimum AoI in an MEC system that considers both transmission and computation.

\textbf{Peak Age of Information:} PAoI is a variant of AoI that reflects the average of the maximum AoI values, introduced in \cite{costa2014age}. In recent years, it has been further investigated in the literature \cite{abd2019aPAoI, ino2019agen, Bedewy2021Low-Power,Yang2022Edge}. In \cite{abd2019aPAoI}, the role of unmanned aerial vehicles (UAV) as mobile relays to minimize the average PAoI for a source-destination pair was explored. An optimization problem was formulated to jointly optimize the UAV's flight trajectory as well as energy and service time allocations for packet transmissions. Although PAoI is not an age penalty functional~\cite{yates2021age}, it is still effective for AoI optimization. Ref.~\cite{ino2019agen} indicated that the stationary distribution of AoI can be expressed in terms of the stationary distributions of the system delay and PAoI, with the average PAoI serving as a fundamental characterization of the age process.

\textbf{Multi-Source:} Many works have considered the analysis and optimization of AoI in multi-source single-server systems \cite{yates2019theage, yate2012real, huang2015opt, Najm2018status, Moltafet2020onthe, bedewy2021optimal}. Some of these works assume that sources generate updates according to independent Poisson processes, such as in M/G/1/1 queues \cite{Najm2018status}. In \cite{bedewy2021optimal, talak2020opt}, the authors considered the multi-source scheduling-sampling optimization problem, but they did not take into account the impact of computation time, which is modeled and studied in our work. Our previous work~\cite{jian2024opt} derived structural optimal sampling policies for a single source system, which is shown later in this paper to also hold for multi-source case. In addition to that, this work further focused on the challenging multi-source scheduling problem and how to jointly optimize source scheduling and status sampling.

\begin{figure*}[!t]
  \begin{minipage}{1\linewidth}
      \centering
      \subfloat[Age of information in the non-preemptive system]{\includegraphics[width=.4\linewidth]{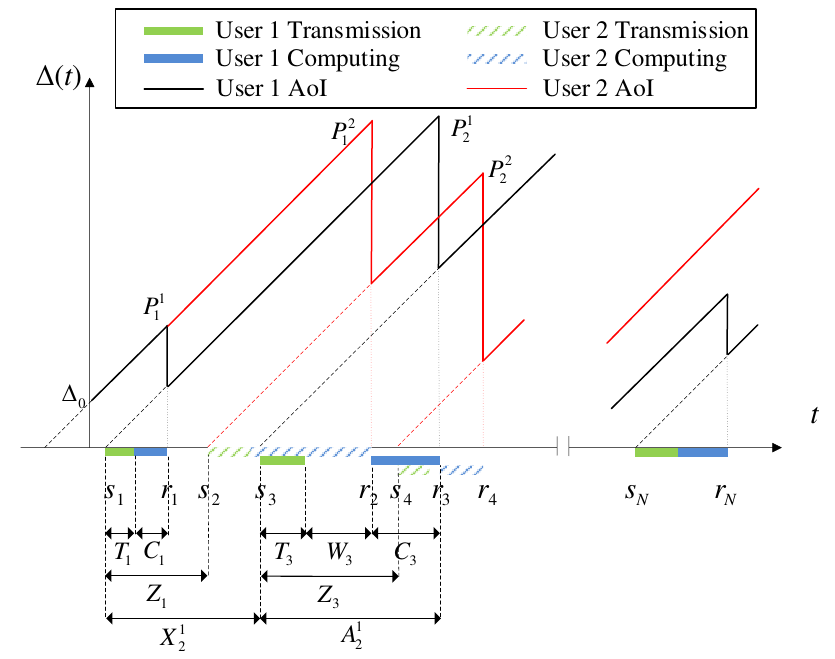}  \label{fig:np}}
      \subfloat[Age of information in the  preemptive system]{\includegraphics[width=.4\linewidth]{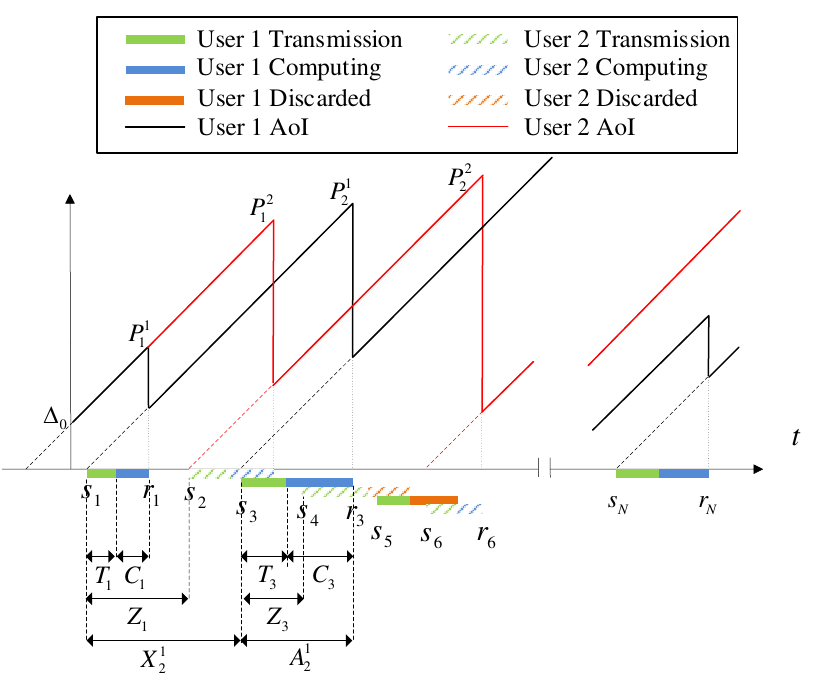} \label{fig:p}} 
  \end{minipage}
  \caption{Age curves in non-preemptive system (a) and preemptive system (b). Both systems consist of two users, and the initial AoI for each user is $\Delta_0$. Green and blue rectangles represent the transmission and computation processes of the data, respectively, while orange rectangles represent preempted data (in the preemptive system). Solid rectangles and dashed rectangles are used to distinguish between the two users.}
  \label{fig:Age_curve}
\end{figure*}

\section{System Model And Problem Formulation}\label{sec:sys}
\subsection{System Model}\label{subsec:systemmodel}

We consider a status update MEC system with $M$ sources as shown in Fig.~\ref{fig:system}, where each source observes a time-varying process and generates real-time status updates. A status update generated by a source is sent to an edge server though a delay channel. A scheduler-sampler at the edge server controls the transmission order of sources and the generation time of update from each source. Once the control decision is made, the scheduler-sampler informs the scheduled source at the intended sampling time to generate and transmit a new update via a dedicated control channel. This is known as the generate-at-will model, meaning that updates is generated under control.

Each update packet arrives at the edge server after a random transmission time, and then is served with a random computation time, and finally is delivered to the destination. { We assume the channel is able to transmit only one packet at a time, and the edger server can process only one packet at a time as well. This fits to the time division systems in practice. In this case, packet queue may occur when either the channel or the server is busy. Since an update becomes stale while waiting in queues, it is essential to mitigate queuing delay. The transmission queuing delay can be completely eliminated by generating a new update only after the channel is idle. But in this case, the updates may accumulate at the server queue causing long computing queuing delay. To completely eliminate the computing queuing delay, updates should be generated after both the channel and the server are idle, which however, results in high AoI due to long update interval. Intuitively, AoI can be reduced by transmitting a new packet when the previous one is under processing to speed up update. To take advantage of this feature, we assume that a new update can be generated only after the server queue is empty. Although the computing queuing delay still exists, it only affects the current update without accumulative effect, and can be effectively reduced by careful policy design. With this strategy, there is at most one packet in the server queue. Hence, we consider a unit-sized server queue. In practice, our proposed methods can work with any non-zero queue size.
}


We optimize the information age under two edge server settings:
\begin{enumerate}
  \item \textit{System without service preemption.} In the non-preemptive service case, newly arrived updates are stored in the unit-sized queue if the edge server is busy, and the updates wait until the server becomes idle. During the waiting period, none of the sources generate new updates. Thus, all updates are successfully delivered.
  \item \textit{System with service preemption.} In the preemptive service case, the edge server always serves the newly arrived packet and discards the old one. Therefore, the queue is always empty. No updates need to wait in the queue, but some updates will be dropped due to preemption.
\end{enumerate}

Suppose the update $i$ is generated and submitted at time $s_i$, its transmission time is $T_i$, and its computation time is $C_i$. We use $m_i$ to represent the source index from which packet $i$ is generated. Assume the transmission time of the updates of all sources are i.i.d with a positive finite mean $0<E[T]<\infty$, and the computation time are also i.i.d. with mean $0 < E[C] < \infty$. {It refers to the scenarios with the same transmission environment where the average transmission time can be set the same by power control and the same monitoring task which is processed via the same execution program such as a neural network with the same computing hardware. The extension to non-i.i.d.~case is briefly discussed later and left for future work.} Let $F_{T}(\cdot),f_{T}(\cdot)$ and $F_{C}(\cdot),f_{C}(\cdot)$ denote the cumulative distribution function and probability density function of the transmission time and computation time, respectively. Since both $T$ and $C$ are positive, we have $f_T(t) = f_C(t) = 0 $ for all $t \le 0$. Denote $W_i$ as the waiting time of packet $i$ in the queue. Hence, the packet $i$ is received by the destination at time
\begin{align}
  r_i=s_i+T_i+W_i+C_i\label{equ:dk}
\end{align}
or is possibly dropped (in preemptive service case). Note that it is possible that $W_i = 0$, which is then omitted in Fig.~\ref{fig:Age_curve}. We set $r_i=\infty$ if packet $i$ is dropped due to preemption. Assume that packet $0$ is submitted at time $s_0 =-T_0-C_0$ and is delivered at $r_0=0$. The inter-generation time between packet $i+1$ and packet $i$, denoted by $Z_i$, is given by 
\begin{align}
  Z_i=s_{i+1}-s_i.\label{equ:zk}
\end{align}
Since the source can only submit updates when the channel is idle and the queue is empty, we have 
\begin{align}
  Z_i\ge T_i+W_i.
\end{align}
A scheduling-sampling strategy, denoted as \(d\), consists of two components: 1) a scheduler, denoted as \(\xi\), determining the sources scheduled for each transmission moment \(\xi = (m_1, m_2, ...)\); 2) a causal sampler, denoted as \(\pi\), determining the instants when new update updates are generated \(\pi = (s_1, s_2, ...)\), which can be equivalently expressed as \(\pi = (Z_1, Z_2, ...)\).

At time $t$, the AoI of source $m$ at the destination, denoted by $\Delta_m(t)$, is given by
\begin{equation}
\Delta_m(t)=t-\mathop{\max}_{i\in \mathbb{N}}\{s_i|m_i=m,r_i\le t\}.\label{def:age}
\end{equation}
The initial AoI is $\Delta_m(0)=T_0+C_0$ for all sources. Fig. \eqref{fig:Age_curve} provides an example of both non-preemptive and preemptive systems involving two users.

\subsection{Problem Formulation}

Consider source $m$ has a weight \(w_m\), representing the level of demand from the destination for update from different sources. For simplicity, we assume \(\sum_{m=1}^M w_m = 1\). Under a given scheduling policy $d=(\xi,\pi)$, the weighted sum average PAoI is defined as
\begin{align}
  \bar{P}_{d}= \lim_{{K \to \infty}}  \sum_{m=1}^{M}w_m \mathbb{E}\left[ \frac{\sum_{i=1}^{K}\Delta_m(r_i^{-})\mathbf{1}_{m_i=m} }{\sum_{i=1}^{K}\mathbf{1}_{m_i=m} }\right],\label{def:PAoI}
\end{align}
where $\mathbf{1}_{E}$ is the indicator function of event $E$, $r_i^{-}$ is the left limit to the delivery time of packet $i$, and $K$ is the total number of updates received at the destination. Since we focus on a set of policies where the source can only submit packets when the channel is idle and the queue is empty, we define the set of all feasible policies as $\mathcal D$, which includes causal schedulers and causal samplers $\Pi=\{\pi={Z_i,i\ge 1}|Z_i\ge T_i+W_i\}$. The problem of minimizing the weighted average PAoI can be formulated as:
\begin{align}
  \bar{P}^{*,\omega}=\min_{d \in \mathcal D}\bar{P}_{d}\label{def:prob},
\end{align}
where the superscript $\omega\equiv \text{wop}$ is the case without preemption and $\omega\equiv \text{wp}$ is the case with preemption. Our goal is to find the optimal policy in $\mathcal D$, denoted by $d^*=(\xi^*,\pi^*)$, that achieves the minimum average PAoI $\bar{P}^{*,\omega}$. When $M=1$, there is only one feasible scheduler. Our previous work~\cite{jian2024opt} considers the sampling problem \eqref{def:prob} in this case and provides the structure of the optimal sampler under two different server settings. In the following, we consider the optimal policy in multi-source systems.

\subsection{Symbol Definitions for Each Source}

For both systems, we denote: (i) $i_k^m$ as the system packet index of the $k$-th generated update packet of source $m$; (ii) the inter-generation time between two consecutive updates of source $m$ as  
\begin{align}  
  X_k^m = s_{i_{k}^{m}} - s_{i_{k-1}^m}; \label{equ:xk}  
\end{align}  
(iii) the system time, i.e., the time spent by the $k$-th generated packet of source $m$ in the channel and the edge server, as  
\begin{align}  
  A_k^m = r_{i_{k}^m} - s_{i_k^m}. \label{equ:Ak_}  
\end{align}  

The Peak Age of Information (PAoI) of source $m$ at time $r_{i_k^m}$ is then calculated as  
\begin{align}  
  P_k^m = X_k^m + A_k^m. \label{equ:Pk}  
\end{align}  

Similarly, we denote $T_k^m$, $W_k^m$, and $C_k^m$ as the transmission time, waiting time, and computation time of update packet $i_k^m$, respectively. Note that in the preemptive system, $W_k^m = 0$ for all $k$. Thus, we have  
\begin{align}  
  A_k^m = T_k^m + W_k^m + C_k^m. \label{equ:Ak}  
\end{align}

Based on these definitions, Problem~\eqref{def:prob} can be reformulated as follows:  
\begin{align}  
  &\bar{P}^{*,\text{wop}} = \min_{d \in D} \sum_{m=1}^{M} w_m \lim_{N_m \to \infty} \frac{1}{N_m} \sum_{k=1}^{N_m} \mathbb{E}\left[P_{k}^m \right] \label{def:probnp} \\  
  &\bar{P}^{*,\text{wp}} = \min_{d \in D} \sum_{m=1}^{M} w_m \lim_{N_m \to \infty} \frac{1}{N_m} \sum_{n=1}^{N_m} \mathbb{E}\left[P_{k_n}^m \right] \label{def:probp},  
\end{align}  
where $N_m$ denotes the total number of packets received by the destination from source $m$, and $k_n$ is the index of the $n$-th successfully received packet from source $m$. In the non-preemptive system, $N_m$ is equal to the total number of updates sent by source $m$.  

Note that when describing the policy, the index $i$ is used to represent the $i$-th packet sent by the system, while calculating PAoI uses $m$ and $k$ to represent the $k$-th packet from source $m$.

\section{Optimal Scheduler and Sampler}

In this section, we demonstrate that for problem~\eqref{def:probnp} and problem~\eqref{def:probp}, the \textbf{random scheduler} achieves the optimal PAoI performance than any other schedulers. Then, we optimize the samplers for the two systems under the random scheduler.

\subsection{Optimal Scheduler}
For any scheduler \(\xi\), we define its scheduling frequency vector $\bm{f} =(f_1, ..., f_M)$ as
\begin{align}
  f_m=\lim_{K \to \infty} \frac{\sum_{i=1}^{K}\mathbf{1}_{m_i=m}}{K}.\label{def:f}
\end{align}

The scheduling frequency vector represents only one statistical characteristic of a scheduler. Intuitively, two schedulers with different rules should exhibit different weighted sum PAoI performance even if they have the same scheduling frequencies. However, we find that the average weighted PAoI depends solely on the scheduling frequencies. The following theorem summarizes this phenomenon.

\begin{theorem}\label{The:WOP}
 Suppose schedulers \(\xi_1\) and \(\xi_2\) have the same scheduling frequencies. For any sampler \(\pi\), there exists a sampler \(\pi'\) such that the weight sum average PAoI performance of \((\xi_2, \pi')\) is the same as \((\xi_1, \pi)\) for both problem~\eqref{def:probnp} and problem~\eqref{def:probp}.
\end{theorem}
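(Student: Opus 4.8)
The plan is to reduce the weighted-sum PAoI to quantities that are manifestly controlled by the scheduling frequency vector $\bm f$ together with a few sampler-determined averages, and then to build $\pi'$ so that those averages are matched under $\xi_2$. I would start from the per-source decomposition $P_k^m = X_k^m + A_k^m$ with $A_k^m = T_k^m + W_k^m + C_k^m$. Since $T$ and $C$ are i.i.d.\ and source-independent, and the label $m_i$ of a packet is fixed at its generation instant $s_i$ (before $T_i,C_i$ are realized), the expected transmission and computation contributions of every source's packets equal the universal constants $E[T]$ and $E[C]$. This isolates the only scheduler-sensitive pieces: the per-source inter-generation average $\bar X_m$ and the per-source waiting average $\bar W_m$ (the latter being identically zero in the preemptive case).

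Next I would handle $\bar X_m$ by a telescoping argument. Because $\sum_{k=1}^{N_m} X_k^m = s_{i_{N_m}^m} - s_{i_0^m}$ collapses to the total generation span, and source $m$ receives a long-run fraction $f_m$ of all generations by \eqref{def:f}, one obtains $\bar X_m = \bar Z / f_m$, where $\bar Z = \lim_{K} \tfrac1K \sum_i Z_i$ is the overall average inter-generation time. Hence the inter-generation contribution to $\bar P_d$ is $\bar Z \sum_m w_m/f_m$, which depends on the scheduler only through $\bm f$ and on the sampler only through the scalar $\bar Z$.

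The crux is the remaining waiting term $\sum_m w_m \bar W_m$. Here the decisive structural fact is that the server is source-blind: the waiting time obeys $W_{i+1} = \max\{0,\, T_i + W_i + C_i - Z_i - T_{i+1}\}$, a fixed functional of the inter-generation sequence and the i.i.d.\ service realizations, with no dependence on the labels $m_i$. Consequently the entire label-free trajectory (waiting times, delivery times, and, in the preemptive case, the drop pattern, which is likewise source-blind) is determined by $\{Z_i\}$ and the service. I would therefore construct $\pi'$ under $\xi_2$ to reproduce that inter-generation sequence in distribution; causality of the sampler together with source-independence of service keeps such a $\pi'$ inside $\mathcal D$. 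This pins down $\bar Z$ and every label-free average.

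I expect the main obstacle to be precisely this last step. Reproducing the label-free trajectory does not by itself fix $\sum_m w_m \bar W_m$: the two schedulers interleave the sources differently, so a given source's packets are preceded by different packets at the single server, and the per-source waiting averages can be reshuffled; since $w_m \neq f_m$ in general there is no automatic cancellation. The technical heart is thus to refine $\pi'$, adapting its generation instants to $\xi_2$'s labelling, so that the per-source joint statistics of $X_k^m$ and $A_k^m$—not merely the aggregate—are reproduced despite the global coupling induced by the shared server; establishing admissibility and the requisite limit/ergodic convergence for this coupling is exactly where source-independence of the service distributions is indispensable, and where the preemptive case needs the extra bookkeeping that only received packets are counted in $N_m$.
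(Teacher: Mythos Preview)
Your decomposition into $\bar Z\sum_m w_m/f_m$ plus a waiting contribution is exactly right, and you correctly diagnose the difficulty: reproducing the label-free trajectory $\{Z_i\}$ under $\xi_2$ fixes $\bar Z$ but reshuffles the per-source waiting averages $\bar W_m$, and since the weights $w_m$ need not equal $f_m$ this does not cancel. Where your proposal stalls is that you do not actually supply the ``refinement'' of $\pi'$ that fixes the per-source statistics; you only name the problem.

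The paper's construction is concrete and different in spirit from your first attempt. Rather than trying to match the global sequence $\{Z_i\}$, it reparametrises the sampler by $\Theta_i := Z_{i-1}-T_{i-1}-W_{i-1}$, the delay between the \emph{start of processing} of packet $i-1$ and the generation of packet $i$. The point of this parametrisation is that the waiting recursion you wrote, $W_{i+1}=\max\{0,T_i+W_i+C_i-Z_i-T_{i+1}\}$, collapses to the \emph{non-recursive} formula
\[
W_i=\max\{0,\,C_{i-1}-\Theta_i-T_i\},
\]
so $E[W_i]$ depends only on the single number $\Theta_i$ and the common laws of $C,T$. Writing $\Theta_k^m$ for the $\Theta$ associated with the $k$-th packet of source $m$, the sampler $\pi'$ under $\xi_2$ is defined by: when $\xi_2$ schedules source $m$ for the $k$-th time, use the same value $\Theta_k^m$ that $(\xi_1,\pi)$ used. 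Because $C,T$ are i.i.d., this preserves every $E[W_k^m]$ individually, hence $\sum_m w_m\bar W_m$; and since the multiset $\{\Theta_i\}$ is merely permuted, the sum $\sum_i E[Z_i]=\sum_i(E[T]+E[W_i]+\Theta_{i+1})$ is unchanged as well, giving the same $\bar Z$. The preemptive case uses the same $\Theta_k^m$ bookkeeping, with $\mathbf 1_{\{C_{i_k^m}\le \Theta_k^m+T_{i_k^m+1}\}}$ playing the role of $W$.

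So the missing idea is precisely this change of variable from $Z_i$ to $\Theta_i$, which decouples the waiting times and turns the construction of $\pi'$ into a one-line permutation argument rather than the coupling problem you anticipated.
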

\begin{proof}
  See Appendix~\ref{proof:The:WOP}.
\end{proof}

Theorem \ref{The:WOP} indicates that any two schedulers with the same scheduling frequencies can achieve the same weighted sum PAoI performance under appropriate samplers, even if they select source nodes based on different scheduling rules. For example, one scheduler follows a round-robin scheduling rule while the other follows a random scheduling rule. Therefore, we focus on a particular type of scheduler as follows:

\begin{definition}\label{def:RS}
  \textbf{Random Scheduler:} At each transmission moment, the scheduler randomly selects a source according to a fixed scheduling frequency vector.
\end{definition}

The advantage of random scheduler is that only a small set of parameters, i.e., the scheduling frequencies needs to be controlled, which is easier to implement compared to other schedulers. For example, while a round-robin scheduler can control the scheduling frequencies by adjusting the order in which each source is scheduled within the polling cycle, this method can become difficult when the scheduling frequencies are irregular fractions. In contrast, a random scheduler can adapt to any scheduling frequencies without increasing implementation complexity.

According to Theorem~\ref{The:WOP}, we have the following corollary.

\begin{corollary}\label{coro1}
  There exists a random scheduler and a causal sampler that is optimal for problem~\eqref{def:probnp}/\eqref{def:probp}.
\end{corollary}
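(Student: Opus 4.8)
The plan is to derive the corollary almost immediately from Theorem~\ref{The:WOP} together with the law of large numbers, since the theorem already establishes that only the scheduling frequency vector matters for the weighted sum average PAoI. First I would invoke an optimal policy $d^* = (\xi^*, \pi^*) \in \mathcal D$ attaining the minimum $\bar{P}^{*,\omega}$ in \eqref{def:probnp}/\eqref{def:probp}, and read off its scheduling frequency vector $\bm{f}^* = (f_1^*, \dots, f_M^*)$ defined by \eqref{def:f}. This vector lies in the probability simplex since its entries are nonnegative and sum to one.

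Next I would construct the competing random scheduler $\xi_R$ of Definition~\ref{def:RS} that, at each transmission moment, selects source $m$ independently with probability $f_m^*$. The key observation is that $\xi_R$ has exactly the same scheduling frequency vector as $\xi^*$: because the selections are i.i.d.\ with $\Pr(m_i = m) = f_m^*$, the strong law of large numbers yields $\lim_{K\to\infty}\frac{1}{K}\sum_{i=1}^{K}\mathbf{1}_{m_i=m} = f_m^*$ almost surely, so the frequency of $\xi_R$ equals $\bm{f}^*$ by \eqref{def:f}.

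With both schedulers sharing the frequency vector $\bm{f}^*$, I would then apply Theorem~\ref{The:WOP} with $\xi_1 = \xi^*$, $\xi_2 = \xi_R$, and sampler $\pi = \pi^*$. The theorem supplies a sampler $\pi'$ for which $(\xi_R, \pi')$ matches the weighted sum average PAoI of $(\xi^*, \pi^*)$, i.e. $\bar{P}_{(\xi_R,\pi')} = \bar{P}_{(\xi^*,\pi^*)} = \bar{P}^{*,\omega}$. Provided $\pi'$ is causal and $(\xi_R,\pi') \in \mathcal D$, this pair is an optimal random-scheduler policy, which proves the corollary.

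The step I expect to require the most care is the attainment/measurability argument. If the minimum in \eqref{def:probnp}/\eqref{def:probp} is only an infimum, I would instead take a sequence $(\xi_n, \pi_n)$ with $\bar{P}_{(\xi_n,\pi_n)} \to \bar{P}^{*,\omega}$, extract a convergent subsequence of frequency vectors $\bm{f}_n \to \bar{\bm{f}}$ by compactness of the simplex, and then use continuity of the achievable PAoI in the frequency vector so that the random scheduler with frequency $\bar{\bm{f}}$ attains the value. I would also double-check that the sampler $\pi'$ returned by Theorem~\ref{The:WOP} is causal and respects the constraint $Z_i \ge T_i + W_i$, so that $(\xi_R, \pi')$ indeed belongs to the feasible set $\mathcal D$.
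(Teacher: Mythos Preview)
Your proposal is correct and follows essentially the same route as the paper, which treats the corollary as an immediate consequence of Theorem~\ref{The:WOP} without spelling out any details. The extra care you add---verifying via the strong law of large numbers that $\xi_R$ realizes the frequency vector $\bm{f}^*$, and worrying about attainment of the infimum and causality of $\pi'$---goes beyond what the paper provides; the causality of $\pi'$ is in fact implicit in the explicit construction given in the proof of Theorem~\ref{The:WOP} (Appendix~\ref{proof:The:WOP}), where the new sampler simply replays the original $\Theta_k^m$ values indexed by source and visit count.
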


Corollary~\ref{coro1} allows us to conclude that the optimal scheduler and the optimal sampler can be optimized separately, given that the optimal scheduler follows a random role. By applying the random scheduler, the optimization parameters of the new problem degrade to a set of scheduling frequencies and a causal sampler.
{
\begin{remark}
	When the distributions of transmission and computation times are independent but non-identical across sources, it can be proved that the randomized scheduler is still optimal but the scheduling frequency depends not only on the source to be scheduled, but also on the source of the previously scheduled packet. We refer it as a conditional randomized scheduler. Under such a scheduler, there are $M^2$ scheduling frequencies to be determined instead of $M$ frequencies in the i.i.d. case. Accordingly, the structure of the optimal sampler may also change. 
\end{remark}
}


\subsection{Optimal Sampler}

As proved in our previous single-source work \cite{jian2024opt}, the sampler's decision variables in both settings have the same upper limit and exhibit specific threshold structures. In the multi-source system, samplers with these properties still achieve the optimal PAoI performance under the random scheduler. Next, we extend the optimal samplers from the single-source system to the multi-source system.

We start with the continuous working sampler, defined as follows:
\begin{definition} \textbf{Continuous Working Sampler:}\cite[Definition 1]{jian2024opt}\label{def:CWsampler}
  A sampler is said to be a continuous working sampler if it satisfies $Z_i \le T_i + W_i + C_i$ for all $i$.
\end{definition}

{Continuous working sampler guarantees that the system is always working. In other words, the channel and the edge server will not be idle simultaneously. According to the definition, it can be seen that in Fig.~\ref{fig:Age_curve}(a), the sampling of packet 3 is continuous working, while the sampling of packet 2 is not.} Ref.~\cite{jian2024opt} shows that in a single-source system, the continuous working sampler achieves the optimal PAoI performance compared to other samplers. The following lemma demonstrates that the continuous working sampler remains the optimal sampler in multi-source systems.

\begin{Lemma}\label{lem:maxZ}
  The continuous working sampler is the optimal for both problem~\eqref{def:probnp} and problem~\eqref{def:probp}.
\end{Lemma}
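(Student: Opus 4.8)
The plan is to show that any sampler violating the continuous working condition can be modified into a continuous working sampler without increasing the weighted sum average PAoI. The key observation is that the continuous working condition $Z_i \le T_i + W_i + C_i$ is exactly the requirement that the next update is submitted no later than the delivery time of the current packet, so that the system never sits completely idle. First I would consider an arbitrary sampler $\pi$ under the random scheduler and identify any packet $i$ for which $Z_i > T_i + W_i + C_i$, meaning there is a strictly positive idle interval after packet $i$ is delivered and before the next update is generated. The goal is to argue that shrinking $Z_i$ down to $T_i + W_i + C_i$ (removing this idle gap) can only help.

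The main step is a coupling or sample-path argument. I would construct a modified sampler $\pi'$ that agrees with $\pi$ except that it eliminates the idle gap following each offending packet, and then track how the PAoI terms $P_k^m = X_k^m + A_k^m$ change. Reducing $Z_i$ directly reduces the inter-generation time $X_k^m$ contribution for the affected source, while I must verify that the system time $A_k^m = T_k^m + W_k^m + C_k^m$ of subsequent packets does not increase. Because the transmission and computation times $T, C$ are i.i.d.~and drawn afresh for each new update, and because under the random scheduler the source selection is independent of the timing, shifting every later sampling instant earlier by the removed gap preserves the joint distribution of the $(T, C)$ sequence while leaving the waiting-time dynamics unchanged in distribution. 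The crucial point to check is that removing the idle gap does not create additional queuing: since the gap was a period when both channel and server were idle, the queue state at the next submission is identical, so the waiting times $W_k^m$ are unaffected.

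For the argument I would treat the two server settings separately where they diverge. In the non-preemptive case, all packets are delivered, so every source's PAoI sum is affected only through the reduced inter-generation times, and the monotonicity is straightforward. In the preemptive case, I must additionally ensure that shrinking $Z_i$ does not change the set of successfully delivered packets in a way that raises PAoI; here I would appeal to the fact that preemption depends on the relative timing of arrivals at the server, which is preserved under the uniform earlier-shift of sampling instants, combined again with the i.i.d.~regeneration of service times. One may also invoke the single-source result of \cite{jian2024opt} on a per-source basis: under the random scheduler, conditioned on the scheduling realization, each source experiences a single-source-like renewal structure, and the continuous working optimality there lifts to the multi-source weighted sum since the weighted combination of per-source improvements is itself an improvement.

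The hard part will be making the waiting-time invariance rigorous in the multi-source setting, because the waiting time $W_k^m$ of a packet from one source depends on the processing of packets from other sources, so the sources are coupled through the shared server queue. I expect the technical obstacle to be showing that eliminating an idle gap after one packet does not cascade into altered waiting times for packets of other sources later in the sequence. The resolution I anticipate is that an idle gap, by definition, occurs only when the server queue is empty, so removing it leaves the queue-occupancy process downstream statistically identical (merely time-shifted), which decouples the modification from all subsequent inter-source interactions and lets the per-term comparison $P_k^{m\prime} \le P_k^m$ go through.
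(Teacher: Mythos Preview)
Your proposal is correct and follows essentially the same approach as the paper: identify a packet with an idle gap ($Z_i > T_i + W_i + C_i$), shift all subsequent sampling instants earlier by that gap, and observe that waiting times (non-preemptive) and preemption relationships (preemptive) are preserved because the gap occurs only when the server is idle, so the sole effect is to reduce one $Z_i$ and hence the relevant $X_k^m$. The paper carries this out as a direct contradiction argument on a single offending packet rather than a global modification, and it is purely sample-path (no appeal to i.i.d.\ regeneration or distributional preservation is needed, since the same $(T_j,C_j)$ realization is reused), but the substance is identical.
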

\begin{proof}
See Appendix~\ref{proof:lem:maxZ}.
\end{proof}

We firstly discuss the non-preemptive systems. In the single-source non-preemptive system, the fixed threshold sampler, which is a special type of continuous working sampler, achieves the optimal PAoI performance compared to other continuous working samplers. In this sampler, the decision variable is $Z_i = T_i + W_i + \min\{C_i, \theta\}$. The following theorem demonstrates that in the multi-source non-preemptive system, the optimal sampler has the same threshold structure.

\begin{theorem}\label{The:WOP_FMT}
  Under the random scheduler, the optimal sampler for problem~\eqref{def:probnp} satisfies $Z_i=T_i+W_i+\min\{C_i, \theta^{m_{i+1}}\}$, where $\{\theta^1,...,\theta^M\}$ is a set of thresholds.
\end{theorem}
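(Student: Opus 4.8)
The plan is to combine Lemma~\ref{lem:maxZ} with a renewal-reward analysis under the random scheduler (Definition~\ref{def:RS}, Corollary~\ref{coro1}) and then reduce the weighted sum PAoI to $M$ decoupled single-threshold problems, each structurally identical to the single-source problem solved in \cite{jian2024opt}. First I would invoke Lemma~\ref{lem:maxZ} to restrict attention to continuous working samplers and write $Z_i = T_i + W_i + Y_i$ with $Y_i \in [0, C_i]$. The decisive preliminary point is that \emph{causality} pins down the admissible form of $Y_i$: when packet $i$ enters service the controller knows the history and the next scheduled source $m_{i+1}$ but not the realization $C_i$, the only event observable during the wait is service completion, and continuous working forbids sampling after $r_i$. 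Hence every causal continuous-working decision is a stopping rule $Y_i = \min\{C_i, \tau_i\}$ whose extra-wait budget $\tau_i$ is measurable with respect to the pre-service information. The theorem then reduces to showing that the optimal $\tau_i$ depends on the history only through $m_{i+1}$, i.e.\ $\tau_i = \theta^{m_{i+1}}$ for fixed constants $\theta^1,\dots,\theta^M$.

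Next I would make the objective explicit. Writing $P_k^m = X_k^m + A_k^m$ and passing to stationary averages gives $\bar P = \sum_m w_m \mathbb{E}[X^m] + \sum_m w_m \mathbb{E}[A^m]$. Two facts render this tractable: (i) because the random scheduler draws $m_{i+1}$ i.i.d.\ and independently of the past, a renewal identity yields $\mathbb{E}[X^m] = \bar Z / f_m$, where $\bar Z = \mathbb{E}[Z_i]$ is the per-packet mean inter-generation time; and (ii) the recursion $W_{i+1} = (C_i - Y_i - T_{i+1})^+$ (the dependence on $W_i$ cancels), which shows that the waiting time of a source-$n$ packet depends on the preceding gap only through the threshold $\tau_i$ chosen for that gap and the independent variables $C_i, T_{i+1}$. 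Substituting $Y_i = \min\{C_i, \tau_i\}$ and collecting every term that involves the thresholds scheduled for source $n$ yields, with $\Omega := \sum_m w_m/f_m$,
\[
  \bar P = \mathrm{const} + \sum_{n=1}^{M} \left[ (\Omega f_n + w_n)\, \mathbb{E}\!\left[(C - \min\{C,\theta^n\} - T)^+\right] + \Omega f_n\, \mathbb{E}\!\left[\min\{C,\theta^n\}\right] \right].
\]

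Because the $n$-th summand $J_n$ depends only on the thresholds used when source $n$ is scheduled, the $M$ problems decouple and may be minimized separately. Since $\bar P$ is linear in the conditional law of $\tau_i$ given $m_{i+1}=n$, its minimum is attained by a point mass at $\arg\min_\theta J_n(\theta)$; hence the optimal $\tau_i$ is the deterministic constant $\theta^n$ and depends on the past only through $m_{i+1}$, which is exactly the claimed source-dependent fixed-threshold sampler. Moreover $J_n$ has the form $a_n \mathbb{E}[(C-\min\{C,\theta\}-T)^+] + b_n \mathbb{E}[\min\{C,\theta\}]$ with $a_n = \Omega f_n + w_n$ and $b_n = \Omega f_n$, which coincides with the single-source objective of \cite{jian2024opt} up to the constants $a_n, b_n$; invoking that result both confirms the threshold structure and provides the equation characterizing each $\theta^n$.

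I expect the decomposition to be the main obstacle. The thresholds are globally coupled, both through the aggregate mean $\bar Z$ that enters every source's age via $\mathbb{E}[X^m] = \bar Z/f_m$ and through the inter-packet waiting $W_{i+1}$, so it is not a priori evident that $\bar P$ splits by source. The crux is the accounting that the threshold $\tau_i$ associated with $m_{i+1}=n$ influences exactly (a) the inter-generation penalty $\mathbb{E}[\min\{C,\theta^n\}]$ carried with coefficient $\Omega f_n$, and (b) the waiting of that same source-$n$ packet carried with coefficient $\Omega f_n + w_n$; establishing these two attributions together with the independence furnished by the random scheduler is what yields the clean per-source separation and, with it, the fixed source-dependent thresholds.
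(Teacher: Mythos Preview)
Your proposal is correct and arrives at the same conclusion, but the route differs from the paper's in a way worth noting. The paper's proof proceeds in two distinct steps: Step~1 is a sample-path lower-bounding argument (the chain \eqref{equ:mla}--\eqref{equ:mlh}) that, for an \emph{arbitrary} causal continuous-working sampler, bounds the Ces\`aro-averaged objective from below by $\sum_m w_m R_{\min}(m)/f_m$, and then observes that this bound is attained by a sampler whose $\Theta_i$ depends (possibly randomly) only on $m_{i+1}$; Step~2 then uses linearity in the law of $\Theta_i$ to pass from random to deterministic thresholds. You instead compute the stationary objective directly, exhibit the per-source additive decomposition $\bar P=\mathrm{const}+\sum_n J_n(\theta^n)$, and invoke linearity in the conditional law of $\tau_i$ given $m_{i+1}=n$ to conclude both history-independence and determinism in one stroke. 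Your derivation of $J_n$ coincides with the paper's later Corollary~\ref{cor:paoinp} and equation~\eqref{def:probnp_tsub}, so the algebra is sound.

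What your argument buys is transparency: the decoupling is visible, and the connection to the single-source problem of \cite{jian2024opt} is immediate. What the paper's sample-path argument buys is rigor at the one point you treat lightly: the sentence ``passing to stationary averages'' presupposes that the displayed formula for $\bar P$ is valid for a general history-dependent, non-stationary $\tau_i$, which is exactly what Step~1 of the paper establishes. Your linearity remark does in fact close this gap---since $C_i,T_{i+1}$ are i.i.d.\ and independent of the causal $\tau_i$, each term $\mathbb{E}[(C_i-\tau_i-T_{i+1})^+\mathbf 1_{m_{i+1}=n}]$ is a linear functional of the conditional law of $\tau_i$, so the Ces\`aro average is pointwise minimized by the deterministic $\arg\min_\theta J_n(\theta)$---but you should state this explicitly rather than first writing $\bar P$ with constants $\theta^n$ already in place and only afterwards appealing to linearity.
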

\begin{proof}
  We complete the proof in two steps: First, we show that the optimal sampler generates a time interval \(\Theta_i\) from a probability distribution based on the scheduling result $m_{i+1}$, and then determines \(Z_i\) as \(Z_i = T_i + W_i + \min\{C_i, \Theta_i \}\). Next, we prove that there exists a fixed set of thresholds that achieves the optimal PAoI performance compared to other sets of probability distributions. The details are provided in Appendix~\ref{proof:The:WOP_FMT}.
\end{proof}

{Theorem \ref{The:WOP_FMT} shows that a new packet should be transmitted a fixed source-dependent time after the previous one starts processing. The minimization operator in the theorem makes sure that the optimal sampler is continuous working, i.e., a new packet is immediately transmitted when the previous one finishes processing even if the fixed time threshold has not been exceeded.} Based on Theorem \ref{The:WOP_FMT}, we define the fixed multi-threshold sampler as follows:
\begin{definition}\textbf{Fixed Multi-Threshold Sampler:}\label{def:FMTsampler}
  A sampler is said to be a fixed multi-threshold sampler $\pi_{\text{FMT}}$ if it determines the decision variable based on a set of fixed thresholds $\bm{\theta}= (\theta^1,...,\theta^M)$ as $Z_i = T_i+W_i+\min\{C_i, \theta^{m_{i+1}}\}$.
\end{definition}

Then, we consider the preemptive system. In the single-source preemptive system, the decision variable of the optimal sampler depends on a function of the transmission time, i.e., \( Z_i = T_i + \min\{C_i, g(T_i)\} \). In the multi-source preemptive system, we observe that the sampler also depends on functions of the transmission time \( T_i \).

\begin{theorem}\label{The:WP_SD}
  Under the random scheduler, the optimal sampler for problem~\eqref{def:probp} satisfies $Z_i=T_i+\min\{C_i, g^{m_{i}}(T_i)\}$, where ${\textbf{g}}= (g^1(\cdot),...,g^M(\cdot))$ is a set of sampling functions and $g^i:\mathbb{R}^+\to \mathbb{R}^+$.
\end{theorem}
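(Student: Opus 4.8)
The plan is to follow the same two-step template as the non-preemptive case (Theorem~\ref{The:WOP_FMT}), but to reorganize the decision around the current packet rather than the next one. By Lemma~\ref{lem:maxZ} I may restrict to continuous working samplers, so that $\Theta_i := Z_i - T_i \in [0, C_i]$ for every $i$ (recall that $W_i = 0$ in the preemptive case). The controller never observes $C_i$ directly; while packet $i$ is in service it only learns that $C_i$ has not yet elapsed. Hence any causal continuous working rule is equivalent to committing to a (possibly randomized, possibly history-dependent) preemption deadline $D_i \in [0, \infty]$ and setting $\Theta_i = \min\{C_i, D_i\}$, so that packet $i$ is delivered precisely when $C_i \le D_i$. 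This already pins down the outer form $Z_i = T_i + \min\{C_i, D_i\}$, and what remains is to show that $D_i$ can be taken as a deterministic function $g^{m_i}(T_i)$.

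First I would collapse an arbitrary causal $D_i$ to a stationary rule that conditions only on $(m_i, T_i)$. Under the random scheduler the triples $(m_i, T_i, C_i)$ are i.i.d., and by construction $D_i$ is conditionally independent of $C_i$ given $(m_i, T_i)$ and the past. A renewal--reward argument on the per-source delivery process then writes the weighted sum PAoI purely in terms of steady-state marginals: the mean inter-generation time $\mathbb E[Z]$, the per-source delivery probabilities $p_m = \mathbb E[\,F_C(D)\mid m\,]$, and the conditional mean system times $\bar A_m = \mathbb E[\,T + C \mid C \le D,\, m\,]$. Each of these functionals sees the policy only through the steady-state conditional law of $D$ given $(m, T)$, so nothing is lost by replacing any history-dependent or look-ahead rule with a stationary kernel $D_i \sim \mu(\cdot \mid m_i, T_i)$. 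This step also explains the indices in the statement: with $W_i = 0$ the choice of $\Theta_i$ no longer fixes the queueing delay of the next packet, as it did in Theorem~\ref{The:WOP_FMT}; it instead decides whether the current source-$m_i$ packet survives and with what system time $A_i = T_i + C_i$, so the relevant source is $m_i$ and, since $T_i$ is already observed and enters $A_i$, the rule must be allowed to depend on $T_i$.

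Next I would eliminate the randomization. For each fixed $(m, T)$ the three building blocks $\mathbb E[\min\{C, D\}\mid m, T]$, $\int F_C(D)\,\mu(dD\mid m,T)$, and $\mathbb E[(T+C)\mathbf 1_{C \le D}\mid m, T]$ are all linear in the deadline distribution $\mu(\cdot \mid m, T)$. Holding the cross-source coupling fixed---as in the alternating scheme underlying Algorithm~\ref{alg:Dinkelbach}---the contribution of a single source reduces to a subproblem of the same mathematical form as the single-source preemptive problem of~\cite{jian2024opt}, namely a linear term plus the fractional term $\bar A_m$. Applying the Dinkelbach transformation linearizes the ratio, leaving, for each realized $T$, a linear functional of $\mu(\cdot\mid m, T)$ to be minimized; its minimum is attained at an extreme point, i.e.\ a point mass, which yields a deterministic value $D = g^m(T)$. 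Iterating this source-by-source optimization gives the asserted structure $Z_i = T_i + \min\{C_i, g^{m_i}(T_i)\}$.

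The step I expect to be hardest is the reduction to a stationary $(m_i, T_i)$-measurable rule. Because the peak age recorded at a delivery spans a whole inter-delivery interval, the per-packet choices are coupled across time and across sources through $\mathbb E[Z]$, so the problem is not a memoryless per-packet optimization and cannot be attacked packet-by-packet directly. The crux is therefore to establish the renewal--reward identity showing that the objective collapses to a function of steady-state marginals alone; once that is in place, the i.i.d.\ structure of $(m_i, T_i, C_i)$ justifies the stationary Markov reduction, and the linearity/Dinkelbach argument of the last paragraph delivers determinism.
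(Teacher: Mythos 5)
Your overall strategy parallels the paper's proof of Theorem~\ref{The:WP_SD}: restrict to continuous working samplers via Lemma~\ref{lem:maxZ}, reduce an arbitrary causal rule to a stationary randomized one by arguing the objective only sees averaged conditional laws (the paper does this through the lower-bound chain \eqref{CC4_0a}--\eqref{CC4_f} and then notes a history-independent rule meets the bound), and finally de-randomize (the paper invokes an MDP argument from \cite{jian2024opt}, whereas you use linearity of the Dinkelbach-transformed objective in the deadline distribution and pick an extreme point --- a legitimate and arguably more self-contained route to the same conclusion). Your explanation of why the rule indexes on $m_i$ rather than $m_{i+1}$ is also the correct one.

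However, there is a concrete error in your model of the preemption dynamics, and it propagates into every formula you build afterwards. You assert that committing to a deadline $D_i$ makes packet $i$ ``delivered precisely when $C_i \le D_i$.'' That is not how this system works: generating packet $i+1$ at time $s_i + T_i + D_i$ does not preempt packet $i$ at that instant; preemption occurs only when packet $i+1$ \emph{arrives at the server}, i.e., after its own transmission time $T_{i+1}$. Hence packet $i$ is delivered if and only if $C_i \le D_i + T_{i+1}$, which is exactly the event $\bar{\Omega}^m$ with $P(\bar{\Omega}^m) = P(C \le g^m(T) + \widehat{T})$ in \eqref{51522} of Corollary~\ref{cor:paoip}, $\widehat{T}$ being an independent copy of the transmission time. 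Consequently your building blocks $p_m = \mathbb{E}[F_C(D)\mid m]$ and $\bar A_m = \mathbb{E}[T+C \mid C \le D,\, m]$ are wrong; they must be replaced by $P(C \le D + \widehat{T}\mid m)$ and $\mathbb{E}[T+C \mid C \le D + \widehat{T},\, m]$, and the conditional-independence bookkeeping must account for the fact that the delivery of packet $i$ is coupled to the next packet through $T_{i+1}$. The architecture of your argument does survive the correction: $T_{i+1}$ is independent of everything decided at stage $i$, so after averaging it out the corrected delivery probability and reward terms remain linear functionals of the conditional law of $D$ given $(m,T)$, the stationarity reduction still goes through, and the extreme-point/Dinkelbach step still yields a deterministic $g^m(T)$. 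But as written, the proof analyzes a different system in which generation, rather than arrival, triggers preemption, and its key formulas do not match the objective actually being minimized.
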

\begin{proof}
  See Appendix~\ref{proof:The:WP_SD}.
\end{proof}

Theorems \ref{The:WOP_FMT} and \ref{The:WP_SD} illustrate the distinct properties of the optimal samplers in preemptive and non-preemptive systems: when transmitting a packet, the optimal sampling action in the non-preemptive system depends on the source of the new packet, whereas in the preemptive system, the sampling action depends on the source of the last packet. The reason is that in the non-preemptive system, the sampling decision directly affects the queuing process of the current packet at the edge server. While in the preemptive system, the sampling decision influences whether the last packet is preempted.

Similarly, we define the sampler described in Theorem~\ref{The:WP_SD} as the \textit{stationary multi-transmission-aware sampler} as follows.
\begin{definition}\textbf{Stationary Multi-Transmission-aware Sampler:}
	A sampler is said to be a stationary multi-transmission-aware sampler $\pi_{\text{SMT}}$ if it determines the decision variable based on a set of sampling functions ${\textbf{g}}= (g^1(\cdot),...,g^M(\cdot))$ as $ Z_i = T_i + \min\{C_i, g^{m_{i}}(T_i)\} $.
\end{definition}

In summary, the PAoI optimization for problems \eqref{def:probnp} and \eqref{def:probp} reduces to determining a set of stationary parameters, including scheduling frequencies and sampling thresholds or functions. In the next section, we focus on algorithms to calculate these parameters.

\section{Parameter Optimizations}

In this section, we demonstrate the methods for calculating the parameters of the optimal policies for problems~\eqref{def:probnp} and \eqref{def:probp}. In the non-preemptive system, the decision parameters include the scheduling frequencies of the random scheduler and the sampling thresholds of the fixed multi-threshold sampler. In the preemptive system, the parameters include the scheduling frequencies and the sampling functions of the fixed multi-threshold sampler. In this section and the subsequent parts, we use $\xi_{R}$ to denote the random scheduler, and $\pi_{\text{FMT}}$ and $\pi_{\text{SMT}}$ to represent the fixed multi-threshold sampler and the stationary multi-transmission-aware sampler, respectively. Since the transmission and computation times are i.i.d., we use $T$ and $C$ to denote random variables following the distributions $f_T(\cdot)$ and $f_C(\cdot)$, respectively.

\subsection{The Non-preemptive System}
In the non-preemptive system, we first reformulate the objective of the problem~\eqref{def:probnp} as a function of the scheduling frequency vector $\bm f$ and the sampling thresholds $\bm \theta$. Then, we propose an algorithm to optimize it.

\subsubsection{Problem Reformulation}
The optimal policy for problem~\eqref{def:probnp} consists of a random scheduler $\xi_{\text{R}}$ and a fixed multi-threshold sampler $\pi_{\text{FMT}}$. Under the optimal policy, the decision variables $\{m_i, Z_i\}$ are stationary and ergodic, which leads to $\{X_k^m, A_k^m, P_k^m\}$ being stationary and ergodic as well. Therefore, we can omit the subscripts in problem~\eqref{def:probnp}. Then, the following corollary presents the expression for the average weighted PAoI in the optimal scheduling-sampling policy:
\begin{corollary}\label{cor:paoinp}
  In the optimal policy for problem \eqref{def:probnp}, the average weighted PAoI is given by:
  \begin{align}
    \bar{P}_{(\xi_{R},\pi_{\text{FMT}})}=&\sum_{m=1}^{M} \frac{w_{m}}{f_{m}} \sum_{n=1}^{M} f_{n} \bar Z^{n} \!+ \! \sum_{m=1}^{M} w_{m}  \bar W^{m} \! + \! \mathbb{E}\left[T\right] \!+\! \mathbb{E}\left[C\right],\label{sum_PAoI}
  \end{align}
where
\begin{align}
  \bar Z^{m} &=\mathbb{E}[T]+ \bar W^{m} +\mathbb{E}[\min\{C,\theta^m\}], \\
   \bar W^{m} &=\mathbb{E}\left[\max\{0,{C}-\theta^m-T\}\right]. \; m = 1, \cdots, M.\label{313}
\end{align}
\end{corollary}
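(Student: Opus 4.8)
The plan is to use the stated stationarity and ergodicity of $\{m_i, Z_i\}$ (hence of $\{X_k^m, A_k^m, P_k^m\}$) to replace the time averages in \eqref{def:probnp} by steady-state expectations, and then to evaluate $\mathbb{E}[P^m] = \mathbb{E}[X^m] + \mathbb{E}[A^m]$ for each source before taking the weighted sum. By \eqref{equ:Ak}, $A^m = T^m + W^m + C^m$, so $\mathbb{E}[A^m] = \mathbb{E}[T] + \bar W^m + \mathbb{E}[C]$ once $\bar W^m$ is identified, the two constant terms being immediate from the i.i.d.\ assumption. So the real content is to (i) pin down $\bar W^m$, and (ii) evaluate $\mathbb{E}[X^m]$.

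First I would derive the waiting-time recursion. Writing $b_i = s_i + T_i + W_i$ for the instant packet $i$ begins processing, the fixed multi-threshold structure $Z_i = T_i + W_i + \min\{C_i, \theta^{m_{i+1}}\}$ gives $s_{i+1} = b_i + \min\{C_i, \theta^{m_{i+1}}\}$, while packet $i$ leaves the server at $b_i + C_i$. In the non-preemptive unit-queue system the next packet waits exactly until the server clears, so
\begin{align}
  W_{i+1} = \max\{0,\,(b_i + C_i) - (s_{i+1} + T_{i+1})\} = \max\{0,\, C_i - \theta^{m_{i+1}} - T_{i+1}\},
\end{align}
using $C_i - \min\{C_i,\theta^{m_{i+1}}\} = \max\{0, C_i - \theta^{m_{i+1}}\}$. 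Conditioning on $m_{i+1} = m$ and using that $T_{i+1}, C_i$ are i.i.d.\ and independent of the scheduling decision yields $\bar W^m = \mathbb{E}[\max\{0, C - \theta^m - T\}]$, which is \eqref{313}.

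Next I would treat the inter-generation term. Since each packet is assigned to source $m$ with frequency $f_m$, a renewal/counting argument gives $\mathbb{E}[X^m] = \mathbb{E}[Z]/f_m$: summing $X_k^m$ over a long horizon telescopes to the total generation span $\sum_i Z_i$, and dividing by the number $N_m \approx f_m K$ of source-$m$ packets produces the ratio (this needs only ergodicity, being a quotient of two time averages). It then remains to evaluate $\mathbb{E}[Z] = \lim_K \tfrac1K \sum_i Z_i$ term by term: $\tfrac1K\sum_i T_i \to \mathbb{E}[T]$; $\tfrac1K\sum_i W_i \to \sum_m f_m \bar W^m$ after grouping packets by their own source $m_i$; and $\tfrac1K\sum_i \min\{C_i,\theta^{m_{i+1}}\} \to \sum_n f_n \mathbb{E}[\min\{C,\theta^n\}]$ after grouping by the next source $m_{i+1}$. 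Regrouping these under a common index yields $\mathbb{E}[Z] = \sum_n f_n \bar Z^n$ with $\bar Z^n = \mathbb{E}[T] + \bar W^n + \mathbb{E}[\min\{C,\theta^n\}]$. Assembling $\mathbb{E}[P^m] = \tfrac{1}{f_m}\sum_n f_n \bar Z^n + \mathbb{E}[T] + \bar W^m + \mathbb{E}[C]$, multiplying by $w_m$, summing over $m$, and using $\sum_m w_m = 1$ to collapse the $\mathbb{E}[T] + \mathbb{E}[C]$ terms gives \eqref{sum_PAoI}.

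The main obstacle will be twofold. One is justifying the waiting-time recursion rigorously, i.e.\ that the unit-sized queue together with the continuous-working property of $\pi_{\text{FMT}}$ guarantees an arriving packet finds at most one packet in service and waits exactly the residual service time. The other, more delicate, is arguing that the current-source grouping of $\bar W^m$ and the next-source grouping of $\mathbb{E}[\min\{C,\theta^n\}]$ may be merged into the single index of $\bar Z^n$. This merge is legitimate precisely because the random scheduler selects $m_{i+1}$ independently of the history, so both groupings carry the common weight $f_n$; I would stress explicitly that $\bar Z^n$ is thus a bookkeeping aggregate and \emph{not} the conditional expectation $\mathbb{E}[Z_i \mid m_{i+1}=n]$, whose waiting component would instead be $\sum_m f_m \bar W^m$.
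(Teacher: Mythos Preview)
Your proposal is correct and follows essentially the same route as the paper's proof: derive the waiting-time recursion to pin down $\bar W^m$, compute $\mathbb{E}[Z]$ by grouping over source labels, use the telescoping identity $\sum_k X_k^m = \sum_i Z_i$ to obtain $\mathbb{E}[X^m] = \mathbb{E}[Z]/f_m$, and assemble. Your treatment of the grouping step is in fact slightly more explicit than the paper's---the paper handles the merge by remarking that under the random scheduler $\theta^{m_{i+1}}$ has the same distribution as $\theta^{m_i}$, whereas you separate the $m_i$-grouping of $W_i$ from the $m_{i+1}$-grouping of $\min\{C_i,\theta^{m_{i+1}}\}$ and then observe that both carry weight $f_n$; your caveat that $\bar Z^n$ is a bookkeeping aggregate rather than $\mathbb{E}[Z_i\mid m_{i+1}=n]$ is a correct and useful clarification that the paper does not make.
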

\begin{proof}
  See Appendix~\ref{proof:cor:paoinp}.
\end{proof}

Based on Corollary~\ref{cor:paoinp}, problem~\eqref{def:probnp} can be transformed into the following parametric optimization problems:
\begin{subequations}
  \begin{align}
    \min_{{\bm{f}},{\theta}}\quad &\bar{P}_{(\xi_{R},\pi_{\text{FMT}})}\label{def:probnp2_obj}\\
      \text{s.t.}\quad
      &\sum_{m=1}^M f_m=1,\\
      &0<f_m<1, \forall m = 1, \cdots, M\\ 
       & \theta^m>0, \forall m = 1, \cdots, M.
  \end{align}
  \label{def:probnp2}
\end{subequations}
{The above problem is non-convex as \eqref{sum_PAoI} is a non-convex function of $\bm f$ and $\bm \theta$, which is difficult to be solved in general. In the following, we observe a closed-form relation between $\bm f$ and $\bm \theta$, and optimizing $\bm \theta$ can be decoupled into single-source threshold optimization. Based on these observations, we proposed an alternating algorithm to iteratively calculate $\bm f$ and $\bm \theta$.}

\subsubsection{Scheduling Frequency Optimization}

The following lemma shows that a solution of the scheduling frequency vector \({\bm{f}}\) for problem~\eqref{def:probnp2} can be expressed as a function of the sampling threshold vector $\bm{\theta}$:

\begin{Lemma}\label{lem:optimalf}
  Given the sampling threshold vector $\bm{\theta}$ of the fixed multi-threshold sampler, a solution of the scheduling frequency for problem~\eqref{def:probnp2} is
  \begin{equation}
    f_m=\frac{\sqrt{\frac{w_m}{\bar W^m +\mathbb{E}\left[\min \left\{C, \theta^m\right\}\right]+\mathbb{E}[T]}}}{\sum_{n=1}^M \sqrt{\frac{w_n}{\bar W^n + \mathbb{E}\left[\min \left\{C, \theta^n\right\}\right]+\mathbb{E}[T]}}}, \forall m.\label{optimalf_np}
  \end{equation}
\end{Lemma}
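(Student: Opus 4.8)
The plan is to fix the threshold vector $\bm\theta$ and minimize the objective \eqref{sum_PAoI} over the frequency vector $\bm f$ alone, treating all $\bm\theta$-dependent quantities as constants. First I would note that, for fixed $\bm\theta$, the terms $\sum_{m} w_m \bar W^m + \mathbb{E}[T] + \mathbb{E}[C]$ do not depend on $\bm f$ and can be dropped. Since the inner sum $\sum_{n=1}^M f_n \bar Z^n$ is independent of the outer index $m$, the remaining term factorizes into a product of two sums,
\begin{align}
  J(\bm f) = \left(\sum_{m=1}^M \frac{w_m}{f_m}\right)\left(\sum_{n=1}^M f_n \bar Z^n\right),\nonumber
\end{align}
to be minimized over the open simplex $\{f_m>0,\ \sum_m f_m = 1\}$, where $\bar Z^n = \mathbb{E}[T] + \bar W^n + \mathbb{E}[\min\{C,\theta^n\}] > 0$ for every $n$.

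The key step is to apply the Cauchy--Schwarz inequality to this product. Choosing $u_m = \sqrt{w_m/f_m}$ and $v_m = \sqrt{f_m \bar Z^m}$, so that the cross term $u_m v_m = \sqrt{w_m \bar Z^m}$ is free of $\bm f$, gives
\begin{align}
  J(\bm f) = \left(\sum_{m} u_m^2\right)\left(\sum_{m} v_m^2\right) \ge \left(\sum_{m} \sqrt{w_m \bar Z^m}\right)^2,\nonumber
\end{align}
a lower bound independent of $\bm f$. Equality holds precisely when $u_m/v_m$ is constant in $m$, i.e. when $f_m^{-1}\sqrt{w_m/\bar Z^m}$ is the same for all $m$, which is equivalent to $f_m \propto \sqrt{w_m/\bar Z^m}$. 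Imposing the normalization $\sum_m f_m = 1$ and substituting $\bar Z^m = \mathbb{E}[T] + \bar W^m + \mathbb{E}[\min\{C,\theta^m\}]$ then yields exactly the claimed expression \eqref{optimalf_np}.

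Finally I would confirm that this stationary point is the genuine minimizer over the feasible set. Because the terms $w_m/f_m$ force $J(\bm f)\to\infty$ as any $f_m\to 0$, the infimum is attained in the interior of the simplex; the Cauchy--Schwarz equality point lies in the interior (all coordinates are strictly positive since $w_m, \bar Z^m > 0$) and meets the lower bound, so it is optimal for the inner problem at fixed $\bm\theta$. I do not anticipate a serious obstacle here: the only point requiring care is recognizing the product structure so that Cauchy--Schwarz applies cleanly, which sidesteps a Lagrangian stationarity analysis that would otherwise need a separate second-order or convexity argument to rule out saddle points. The statement says ``a solution'' rather than ``the solution'' because \eqref{def:probnp2} remains non-convex jointly in $(\bm f,\bm\theta)$, so this result characterizes only the optimal $\bm f$ for a given $\bm\theta$.
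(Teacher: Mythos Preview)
Your argument is correct and takes a genuinely different route from the paper. The paper proceeds by Lagrangian duality: it introduces multipliers for the simplex constraints, writes out the KKT stationarity condition $\partial\mathcal{L}/\partial f_m = 0$, multiplies through by $f_m$ to eliminate the complementary-slackness term, sums over $m$ to identify the multiplier $\mu$, and then algebraically solves for $f_m$ up to a normalizing constant. That computation identifies a stationary point but does not, by itself, certify that the point is a minimizer rather than a saddle; the paper simply calls it ``a solution.''

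Your Cauchy--Schwarz approach exploits the product structure $J(\bm f)=\big(\sum_m w_m/f_m\big)\big(\sum_n f_n\bar Z^n\big)$ directly and is both shorter and stronger: it produces a lower bound independent of $\bm f$ and shows it is attained at an interior point of the simplex, so the resulting $\bm f$ is the \emph{global} minimizer of the $\bm f$-subproblem for fixed $\bm\theta$, without any second-order check. The boundary argument you add (that $J\to\infty$ as any $f_m\to 0$) is not strictly needed once the lower bound is met, but it does no harm. The only implicit assumption is $w_m>0$ for all $m$, which the paper also tacitly uses. Your closing remark about the phrase ``a solution'' reflecting joint non-convexity in $(\bm f,\bm\theta)$ is apt.
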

\begin{proof}
  See Appendix~\ref{proof:lem:optimalf}.
\end{proof}

{Lemma \ref{lem:optimalf} provides a closed-form relation between $\bm f$ and $\bm \theta$. It shows that $f_m$ is proportional to the numerator which is a function of $\bm \theta_m$. The denominator is used for normalization, which guarantees that the sum of $f_m$'s equals to 1.
}

\subsubsection{Sampling Threshold Optimization} \label{def:subsec_optimalsampling_np}
Given the scheduling frequency vector \({\bm{f}}\) and by re-organizing the summation order in \eqref{sum_PAoI} and substituting variables, the problem~\eqref{def:probnp2} can be transformed into the following problem:
\begin{align}
  \min_{{\theta}} \;&\sum_{m=1}^M \left(w_m\bar W^{m} + f_m \left[\bar W^{m} + \mathbb{E} [\min \left\{C, \theta^m\right\}]\right] \sum_{n=1}^M \frac{w_n}{f_n}  \right)\label{def:probnp_t}
\end{align}

It can be seen that the sampling thresholds $(\theta^1, \cdots,\theta^M)$ in \eqref{def:probnp_t} are independent with one another because each term containing \(\theta^m\) is additive. According to this, problem \eqref{def:probnp_t} can be decomposed into $M$ subproblems, each optimizing a parameter in the sampling threshold vector $\bm \theta$. The $m$-th subproblem is:
\begin{align}
	\min_{\theta^m}\;& w_m \bar W^{m} + f_m \left[\bar W^m + \mathbb{E}[\min \left\{C, \theta^m\right\}]\right] \sum_{n=1}^M \frac{w_n}{f_n}, \label{def:probnp_tsub}
\end{align}
where $\bar W^{m}$ is given by \eqref{313}. 

We can observe that problem \eqref{def:probnp_tsub} has the same form as the sampling threshold optimization problem in the single-source non-preemptive system\cite[Problem (10)]{jian2024opt}, with the only difference being some coefficients. Therefore, we can apply the optimal parameter $\theta^*$ calculation method in \cite[Sec. III-C]{jian2024opt} to solve problem \eqref{def:probnp_tsub}. As the $M$ subproblems are solved, problem \eqref{def:probnp_t} is also resolved.

\begin{remark}
  The method proposed in \cite{jian2024opt} determines the optimal parameter $\theta^*$ by searching for the local minima of the objective function \eqref{def:probnp_tsub}. Specifically, by analyzing the first-order and second-order conditions of the objective function \eqref{def:probnp_tsub}, its local minima can be enumerated. Then, $\theta^*$ can be determined by comparing the objective function values at these local minima. For certain special distributions, such as exponential transmission time or computation time, the local minima can be identified straightforwardly. Although this method is based on search, the obtained optimal solution is accurate and computationally efficient.  
\end{remark}

\begin{algorithm}[!t]
	\caption{Alternating Optimization Algorithm to Solve Problem \eqref{def:probnp}}\label{alg:optimal_policy}
	\begin{algorithmic}[1]
		\STATE Set arbitrary initial scheduling frequency vector ${\bm{f}}$ and sampling threshold vector $\bm{\theta}$;
		\STATE Set a sufficiently small tolerance parameter $\epsilon$, $P_{\text{old}} = 0$;
		\STATE Calculate the weighted average PAoI $P$ according to formula \eqref{sum_PAoI};
		\WHILE {$|P_{\text{old}} - P| > \epsilon$}
		\STATE Update scheduling frequency vector ${\bm{f}}$ according to equation \eqref{optimalf_np};
		\STATE Update sampling threshold vector $\bm{\theta}$ according to results in Sec.~\ref{def:subsec_optimalsampling_np};
		\STATE $P_{\text{old}} = P$.
		\STATE Update $P$ according to formula \eqref{sum_PAoI};
		\ENDWHILE
		\STATE \textbf{return} $d^*=({\bm{f}} , \bm{\theta} )$.
	\end{algorithmic}
\end{algorithm}

\subsubsection{Alternating Optimization Algorithm}

Based on the above results, we propose an alternating optimization algorithm to calculate the parameters for the scheduler and the sampler. The algorithm details are shown in Algorithm \ref{alg:optimal_policy}: The algorithm iteratively updates the scheduling frequency vector \({\bm{f}}\) and the sampling threshold \({\theta}\) until the weighted average PAoI converges. {It is remarkable that since the problem (17) is non-convex, the alternating optimization can only converge to the local optimum, which is not guaranteed to be the global optimum. However, numerical results in Sec.~\ref{sec:num} demonstrate that the proposed algorithm perform almost the same as the global optimum.}

%

\subsection{The Preemptive System}

The parameters in the preemptive system can be optimized by following the similar procedure.

\subsubsection{Problem Reformulation}

The optimal scheduling-sampling policy for problem~\eqref{def:probp} is a random scheduler $\xi_{\text{R}}$ and a stationary multi-transmission-aware sampler $\pi_{\text{SMT}}$, the average weighted PAoI is given by the following corollary:

\begin{corollary}\label{cor:paoip}
In the optimal policy, the average weighted PAoI in problem \eqref{def:probp} is given by:
\begin{align}
  \bar{P}_{(\xi_{R},\pi_{\text{SMT}})}\!=\!\sum_{m=1}^{M} \frac{w_m}{P(\bar{\Omega}^m)} \left\{ \frac{\mathbb{E}[Z]}{f_m}+{\mathbb{E}[(T+C)\mathbf{1}_{\bar{\Omega}^m}]}\right\},\label{sum_PAoI2}
\end{align}
where $\bar{\Omega}^m$ denotes the event of successful delivery of a packet from source $m$ and $\mathbb{E}[Z]$ represents the average inter-generation time, which are calculated as
\begin{align}
  \mathbb{E}[Z]&=\mathbb{E}[T]+\sum_{m=1}^M f_m\mathbb{E}[\min\{C,g^m(T)\}],\label{5215}\\
  P(\bar{\Omega}^m)&=P(C\le g^m(T)+\widehat{T} ),\label{51522}
\end{align}
where $\widehat{T}$ and $T$ are i.i.d. computation times.
\end{corollary}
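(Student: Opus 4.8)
The plan is to exploit the stationarity and ergodicity of the process $\{m_i,T_i,C_i,Z_i\}$ under the random scheduler $\xi_R$ and the stationary multi-transmission-aware sampler $\pi_{\text{SMT}}$ (the same stationarity argument used ahead of Corollary~\ref{cor:paoinp}), so that the limiting time-averages in \eqref{def:probp} may be replaced by expectations, and then to evaluate the per-source peak age through a renewal-reward decomposition. By Theorem~\ref{The:WP_SD} the sampler gives $Z_i = T_i + \min\{C_i, g^{m_i}(T_i)\}$ with $W_i=0$ throughout, so I only need to track when a generated packet survives preemption and how a surviving packet contributes to the peak age.

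First I would characterize the successful-delivery event $\bar{\Omega}^m$. A source-$m$ packet $i$ reaches the server at $s_i+T_i$ and would finish computing at $s_i+T_i+C_i$, while the next packet reaches the server at $s_{i+1}+T_{i+1}=s_i+T_i+\min\{C_i,g^m(T_i)\}+T_{i+1}$. Packet $i$ is preempted exactly when the latter precedes the former, i.e. $\min\{C_i,g^m(T_i)\}+T_{i+1}<C_i$, which can occur only if $C_i>g^m(T_i)$ and then reduces to $C_i>g^m(T_i)+T_{i+1}$. Writing $T,C$ for the current packet's times and $\widehat{T}$ for the next packet's transmission time (all independent), survival is the event $C\le g^m(T)+\widehat{T}$, yielding $P(\bar{\Omega}^m)=P(C\le g^m(T)+\widehat{T})$ as in \eqref{51522}; note it depends on the next source only through $\widehat{T}$. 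Taking the expectation of $Z_i$ and averaging over the scheduling frequencies gives $\mathbb{E}[Z]=\mathbb{E}[T]+\sum_m f_m\,\mathbb{E}[\min\{C,g^m(T)\}]$, which is \eqref{5215}.

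Next I would decompose the peak age. Just before the $n$-th successfully received packet of source $m$ arrives, the age of source $m$ equals the time elapsed since the generation of the previous successfully received packet of that source, so $P^{(n),m}=(s^{(n)}_m-s^{(n-1)}_m)+A^{(n)}_m$, where $s^{(n)}_m$ is the generation instant of the $n$-th received packet and $A^{(n)}_m=T+C$ is its system time. Summing the first term over $n$ telescopes to the total generation span of source $m$, which as $K\to\infty$ coincides with the whole horizon $\sum_{i=1}^{K}Z_i\approx K\,\mathbb{E}[Z]$, where $K$ is the number of generated packets. By the ergodic theorem the number received from source $m$ among the first $K$ generated satisfies $N_m/K\to f_m P(\bar{\Omega}^m)$, since a source-$m$ packet is generated with frequency $f_m$ and survives with probability $P(\bar{\Omega}^m)$. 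Hence the averaged inter-received-generation time is $K\mathbb{E}[Z]/N_m\to \mathbb{E}[Z]/(f_m P(\bar{\Omega}^m))$, while the averaged system time converges to $\mathbb{E}[A\mathbf{1}_{\bar{\Omega}^m}]/P(\bar{\Omega}^m)=\mathbb{E}[(T+C)\mathbf{1}_{\bar{\Omega}^m}]/P(\bar{\Omega}^m)$. Adding the two contributions, multiplying by $w_m$ and summing over $m$ reproduces \eqref{sum_PAoI2}.

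The hard part will be the rigorous bookkeeping of the inter-received-generation time, which is a random sum of the $Z_i$'s over a random number of intervening dropped packets rather than a single $Z_i$. I would handle this by defining a per-source renewal cycle delimited by consecutive successful deliveries and invoking the renewal-reward theorem on $\{m_i,T_i,C_i,Z_i\}$, so that the telescoping and the convergence $N_m/K\to f_m P(\bar{\Omega}^m)$ follow from the ergodic theorem while the boundary terms from the first and last partial cycles are $O(1)$ and vanish after dividing by $N_m$. A minor point to verify is the independence underlying $\mathbb{E}[(T+C)\mathbf{1}_{\bar{\Omega}^m}]$ and $P(\bar{\Omega}^m)$: the survival indicator couples $C$, $T$ and the independent copy $\widehat{T}$, but $T+C$ is free of $\widehat{T}$, so both expectations are taken over three independent variables and are well defined.
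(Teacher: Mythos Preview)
Your proposal is correct and follows essentially the same route as the paper: both identify the survival event $\{C\le g^m(T)+\widehat{T}\}$, compute $\mathbb{E}[Z]$ by averaging over the scheduling frequencies, and obtain the per-source average peak by writing $P^m$ as the inter-received-generation gap (which telescopes to the full horizon $\sum_i Z_i$) plus the system time $T+C$ of a surviving packet, then dividing by the fraction $f_mP(\bar{\Omega}^m)$ of packets that are source-$m$ and survive. The paper carries this out via indicator-function algebra (rewriting the per-source sum as $\sum_n \mathbb{E}[Z_{n-1}+(T_n+C_n)\mathbf{1}_{\Omega_n^m}\mathbf{1}_{\Omega_n}]/\sum_n\mathbf{1}_{\Omega_n^m}\mathbf{1}_{\Omega_n}$ and then conditioning on $\mathbf{1}_{\Omega_n^m}$), whereas you phrase the same computation in renewal-reward language, but the substance is identical.
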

\begin{proof}
  See Appendix \ref{proof:cor:paoip}.
\end{proof}

Based on Corollary~\ref{cor:paoip}, problem~\eqref{def:probp} can be transformed into the following parameter optimization problem:
\begin{subequations}
  \begin{align}
    \min_{{\bm{f}},\bm{g}}\quad &\bar{P}_{(\xi_{R},\pi_{\text{SMT}})}\label{def:probp2_obj}\\
      \text{s.t.}\quad &\sum_{m=1}^M f_m=1,\\
      &0<f_m<1, \forall m = 1, \cdots, M\\
      &g^m(\cdot) > 0, \forall m = 1, \cdots, M.      
  \end{align}
  \label{def:probp2}
\end{subequations}

\subsubsection{Scheduling Frequency Optimization}

Given sampling functions ${g}$, the expressions for the scheduling frequency solution of problem~\eqref{def:probnp2} are provided in the following lemma:

\begin{Lemma}\label{lem:optimalf_p}
  Given sampling functions ${g}$ of the fixed multi-threshold sampler, the scheduling frequency solution for problem~\eqref{def:probp2} is
  \begin{equation}
    f_m=\frac{\sqrt{\frac{w_m}{ P(\bar{\Omega}^m)(\mathbb{E}\left[\min \left\{C, g^m(T)\right\}\right]+\mathbb{E}[T])}}}{\sum_{n=1}^M \sqrt{\frac{w_n}{P(\bar{\Omega}^m)(\mathbb{E}\left[\min \left\{C, g^n(T)\right\}\right]+\mathbb{E}[T])}}}\label{optimalf_p}.
  \end{equation}
\end{Lemma}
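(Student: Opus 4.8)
The plan is to fix the sampling functions $\bm g$ and minimize the objective \eqref{sum_PAoI2} over the probability simplex $\{\bm f:\sum_{m=1}^M f_m=1,\ 0<f_m<1\}$. First I would isolate the dependence on $\bm f$. Once $\bm g$ is fixed, the quantities $P(\bar\Omega^m)$ from \eqref{51522} and $\mathbb{E}[(T+C)\mathbf{1}_{\bar\Omega^m}]$ depend only on $g^m$ and the (source-independent, i.i.d.) transmission and computation times, so they are constants; the only $\bm f$-carrying term is $\mathbb{E}[Z]/f_m$ together with $\mathbb{E}[Z]$ itself, which by \eqref{5215} equals $\mathbb{E}[T]+\sum_n f_n\mathbb{E}[\min\{C,g^n(T)\}]$. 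Collecting terms, the objective becomes
\begin{align}
\mathbb{E}[Z]\sum_{m=1}^M \frac{w_m}{P(\bar\Omega^m)\,f_m}+\text{const},\nonumber
\end{align}
so it suffices to minimize $J(\bm f)=\mathbb{E}[Z]\sum_m a_m/f_m$ with $a_m:=w_m/P(\bar\Omega^m)>0$.

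The key step is to use the normalization constraint to convert $J$ into a product of two linear forms. On the simplex one may write $\mathbb{E}[T]=\sum_n f_n\,\mathbb{E}[T]$, so that
\begin{align}
\mathbb{E}[Z]=\sum_{n=1}^M f_n\big(\mathbb{E}[T]+\mathbb{E}[\min\{C,g^n(T)\}]\big)=\sum_{n=1}^M f_n c_n,\nonumber
\end{align}
with $c_n:=\mathbb{E}[T]+\mathbb{E}[\min\{C,g^n(T)\}]>0$. Then $J(\bm f)=\big(\sum_n f_n c_n\big)\big(\sum_m a_m/f_m\big)$, which is scale-invariant in $\bm f$. I would then apply Cauchy--Schwarz with $u_m=\sqrt{f_m c_m}$ and $v_m=\sqrt{a_m/f_m}$ to get $J(\bm f)\ge\big(\sum_m\sqrt{a_m c_m}\big)^2$, with equality if and only if $f_m\propto\sqrt{a_m/c_m}$. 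Since this lower bound is attained exactly along the ray $f_m\propto\sqrt{a_m/c_m}$, which meets the simplex at a unique point, normalizing $f_m=\sqrt{a_m/c_m}\big/\sum_n\sqrt{a_n/c_n}$ gives the minimizer on the simplex; substituting $a_m$ and $c_m$ recovers \eqref{optimalf_p}, and feasibility $0<f_m<1$ is immediate since every $a_m,c_m>0$.

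The main obstacle is the coupling introduced by $\mathbb{E}[Z]$: because the additive $\mathbb{E}[T]$ term in \eqref{5215} is not attached to any single $f_m$, a direct Lagrangian treatment produces awkward cross terms that do not decouple per source. The crucial observation that removes this is that, on the constraint surface, $\mathbb{E}[T]$ can be redistributed as $\sum_n f_n\mathbb{E}[T]$ and absorbed into the per-source coefficient $c_n$, after which the objective collapses to the same scale-invariant product form already handled in the non-preemptive case (cf. Lemma~\ref{lem:optimalf}), and the Cauchy--Schwarz argument closes it. An alternative would be to locate the stationary point via Lagrange multipliers and check that it coincides with the Cauchy--Schwarz equality point; I would prefer the inequality route, since it certifies global optimality over the simplex directly rather than only identifying a critical point.
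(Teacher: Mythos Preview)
Your proposal is correct and takes a genuinely different route from the paper. The paper proceeds, as in Lemma~\ref{lem:optimalf}, by forming the Lagrangian, writing out the KKT conditions, summing the stationarity equations over $m$ to eliminate the multiplier $\mu$, and then back-substituting to extract $f_m$ up to normalization. You instead reduce the $\bm f$-dependent part of \eqref{sum_PAoI2} to the scale-invariant product $\big(\sum_n f_n c_n\big)\big(\sum_m a_m/f_m\big)$ and finish with Cauchy--Schwarz. The crucial algebraic move---absorbing $\mathbb{E}[T]$ into the per-source coefficient $c_n$ via the simplex constraint---is exactly the step that, in the paper's KKT derivation, is performed implicitly when the first-order conditions are summed over $m$; your version makes it explicit and avoids the multiplier bookkeeping. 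The payoff is that your argument certifies global optimality on the simplex directly (the Cauchy--Schwarz bound is attained, and the equality point is unique after normalization), whereas the paper's KKT route only locates a stationary point and indeed phrases the result as ``a solution'' rather than ``the minimizer.'' Both approaches arrive at \eqref{optimalf_p}, but yours is shorter and gives a stronger conclusion.
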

\begin{proof}
  The proof follows a similar procedure with the proof of Lemma~\ref{lem:optimalf}. By utilizing Lagrangian duality theory, we set up the Lagrangian function $\mathcal{L}$ with its partial derivative $\partial \mathcal{L}/\partial f_m = 0$ and applying the KKT optimality conditions to $\mathcal{L}$, we ultimately transform and derive \eqref{optimalf_p}. Proof is completed.
\end{proof}

\subsubsection{Sampling Threshold Optimization}
Similar to \eqref{def:probnp_t}, the objective function \eqref{sum_PAoI2} is in summation form with respect to all sources. However, there is a vital difference that each summation term in \eqref{def:probnp_t} is only related to one $\theta^m$, while those in \eqref{sum_PAoI2} are correlated as \eqref{5215} is related to all $g^m$'s. Hence, the problem \eqref{def:probp2_obj} cannot be decomposed. Nevertheless, we can apply an iterative optimization algorithm to determine the sampler. In particular, we iteratively optimize a certain $g^m$ by fixing all other $g^i$'s, $\forall i \neq m$, and solve the following problem:
\begin{align}
  \min_{g^m(\cdot)}\quad & \frac{w_m}{P(\bar{\Omega}^m)} \left\{ \frac{\mathbb{E}[Z]}{f_m} + {\mathbb{E}[(T+C)\mathbf{1}_{\bar{\Omega}^m}]}\right\}. \label{def:probp2_gm}
\end{align}
Problem \eqref{def:probp2_gm} is a fractional programming problem, which can be solved iteratively using the Dinkelbach algorithm \cite{dinkelbach1967}. In particular, consider an optimization problem with a parameter $c \ge 0$:
\begin{align}
  p(c)\triangleq \min_{g^m(\cdot)}&F(g^m| {\bm{f}} , \bm{g}_m^-) - cP(\bar{\Omega}^m),\label{def:probp_tp}
\end{align}
where $\bm{g}_m^- = (g^1, \cdots, g^{(m-1)}, g^{(m+1)}, \cdots, g^M)$, and 
\begin{align}
	F(g^m| {\bm{f}} , \bm{g}_m^-) = {w_m} \left\{ \frac{\mathbb{E}[Z]}{f_m} + {\mathbb{E}[(T+C)\mathbf{1}_{\bar{\Omega}^m}]}\right\}.
\end{align}

By solving the parameterized problem \eqref{def:probp_tp} with proper value of $c$, we can find the optimal solution to problem \eqref{def:probp2_gm}. The following lemma gives the relationship between the optimal sampler for problem~\eqref{def:probp2_gm} and problem~\eqref{def:probp_tp}.

\begin{Lemma}\label{Lem:pc}
  $p(c)$ is decreasing with respect to $c$. By solving $p(c^*)=0$, we obtain the optimal solution to problem \eqref{def:probp2_gm}.
\end{Lemma}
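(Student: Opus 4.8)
The plan is to recognize Problem~\eqref{def:probp2_gm} as a single-ratio fractional program of the form $\min_{g^m} F(g^m\mid\bm f,\bm g_m^-)/P(\bar\Omega^m)$, where both the numerator $F$ and the denominator $P(\bar\Omega^m)$ are functionals of $g^m(\cdot)$ through \eqref{5215} and \eqref{51522}, and to invoke the classical Dinkelbach equivalence between this ratio minimization and its parametric (difference) form \eqref{def:probp_tp}. The only structural fact I need from the model is that the denominator is strictly positive, $P(\bar\Omega^m)=P(C\le g^m(T)+\widehat T)>0$, which holds because $g^m(\cdot)>0$ by the feasibility constraint in \eqref{def:probp2}. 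This positivity is what drives both claims.

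First I would establish monotonicity. Fix any $c_1<c_2$ and let $g_1$ attain the minimum in $p(c_1)$, writing $F_1$ and $P_1>0$ for the values of $F(g^m\mid\bm f,\bm g_m^-)$ and $P(\bar\Omega^m)$ at $g^m=g_1$. Then, since $P_1>0$,
\[
p(c_2)\le F_1-c_2 P_1 < F_1-c_1 P_1 = p(c_1),
\]
so $p$ is strictly decreasing in $c$, which yields the first assertion. Read as an infimum over feasible $g^m$, the same difference shows that $p(c)$ is the lower envelope of a family of affine (hence continuous) functions of $c$, so $p$ is continuous; continuity together with strict monotonicity guarantees that $p(c)=0$ has a unique root whenever $p$ changes sign, which is what makes the second claim well posed.

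Next I would prove the equivalence at the root. Let $c^*=\min_{g^m}F(g^m\mid\bm f,\bm g_m^-)/P(\bar\Omega^m)$ be the optimal value of \eqref{def:probp2_gm} with a minimizer $g^\star$, and denote by $F^\star,P^\star>0$ its numerator and denominator values. For every feasible $g^m$ we have $F/P\ge c^*$, and multiplying through by $P(\bar\Omega^m)>0$ gives $F-c^* P(\bar\Omega^m)\ge 0$, with equality at $g^\star$; taking the infimum over $g^m$ yields $p(c^*)=0$, attained exactly at $g^\star$. Conversely, if $p(c^*)=0$ is attained at some $\tilde g$ with values $\tilde F,\tilde P$, then $\tilde F-c^*\tilde P=0$ gives $\tilde F/\tilde P=c^*$, while $p(c^*)=0$ forces $F-c^*P\ge 0$, i.e.\ $F/P\ge c^*$, for all feasible $g^m$; hence $\tilde g$ solves \eqref{def:probp2_gm}. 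Combining with strict monotonicity, $c^*$ is the unique root of $p$, so solving $p(c^*)=0$ simultaneously delivers the optimal ratio value $c^*$ and an optimal sampler, which is exactly the Dinkelbach iteration.

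I expect the main obstacle to be not the algebra above---which is the textbook Dinkelbach argument---but the verification that the inner minimization defining $p(c)$ is well posed over the infinite-dimensional feasible set of sampling functions $g^m(\cdot)$. One must argue that the minimizer is attained (or work with infima along a minimizing sequence) and, more importantly, that the pointwise optimization of $F-cP$ over $g^m$ decouples into a tractable per-realization problem in the transmission time, so that each parametric subproblem \eqref{def:probp_tp} is genuinely solvable. Confirming attainment and this reduction to a one-dimensional optimization of $g^m(t)$ for each fixed $t$ requires exploiting the explicit forms of $\mathbb{E}[Z]$ in \eqref{5215} and $P(\bar\Omega^m)$ in \eqref{51522}, and is where the real work lies.
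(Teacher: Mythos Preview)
Your proposal is correct and follows essentially the same Dinkelbach-style argument as the paper: monotonicity of $p(c)$ via feasibility of a minimizer at a different parameter value, and equivalence of the root $p(c^*)=0$ with the fractional optimum of \eqref{def:probp2_gm}. Your treatment is in fact more careful than the paper's---you make the strict inequality explicit via $P(\bar\Omega^m)>0$, note continuity, and give both directions of the equivalence---while the caveat you raise about attainment over the infinite-dimensional set of $g^m(\cdot)$ is a genuine subtlety that the paper simply takes for granted.
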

\begin{proof}
  See Appendix \ref{proof:Lem:pc}.
\end{proof}

According to Lemma \ref{Lem:pc}, we can solve Problem \eqref{def:probp2_gm} using a two-layer iterative algorithm: 1) In the inner loop, we solve Problem \eqref{def:probp_tp} by exhaustive search for $g^m$ with given $\bm{f}$ and $\bm{g}_m^-$; 2) In the outer loop, we update $c$ such that the solution of the inner loop satisfies $p(c) = 0$. The detailed procedure is shown in Algorithm \ref{alg:Dinkelbach}.  

\begin{algorithm}[t]  
  \caption{Dinkelbach Algorithm to Obtain Optimal Sampler for Problem~\eqref{def:probp2_gm}}\label{alg:Dinkelbach}  
  \begin{algorithmic}[1]  
      \STATE $\text{Given a sufficiently small tolerance } \delta \text{ and arbitrary } c$.  
      \STATE Initialize Convergence = FALSE.  
      \WHILE {Convergence = FALSE}
      \STATE $g^m(T) = \arg\min F(g^m| {\bm{f}} , \bm{g}_m^-) - c P(\bar{\Omega}^m)$ for any $T$.  
      \STATE $p(c) = F(g^m| {\bm{f}} , \bm{g}_m^-) - c P(\bar{\Omega}^m)$.  
      \IF {$-\delta \le p(c) \le \delta$}  
      \STATE Convergence = TRUE.  
      \STATE $\textbf{return } {g^m(\cdot)}$.  
      \ELSE  
      \STATE Update $c = \frac{F(g^m| {\bm{f}} , \bm{g}_m^-)}{P(\bar{\Omega}^m)}$.  
      \ENDIF  
      \ENDWHILE  
  \end{algorithmic}  
\end{algorithm}

\subsubsection{Alternating Optimization Algorithm}

We can solve Problem \eqref{def:probp} iteratively. First, the scheduling frequency is determined according to Lemma \ref{lem:optimalf_p}. Then, for each source, the optimal sampler is derived using Algorithm \ref{alg:Dinkelbach}. By alternately updating the scheduling and the sampler, we can ultimately solve Problem \eqref{def:probp}.

\section{Numerical Results}\label{sec:num}

In this section, we compare the proposed policy with the benchmark policies in both preemptive and non-preemptive systems under different transmission and computation time distributions. We set the number of sources $M=5$, with weights $\{1/15, 2/15, 3/15, 4/15, 5/15\}$. The benchmark schedulers include the weighted Round-Robin and the Max-Age-First (MAF) scheduler \cite{bedewy2021optimal}. The benchmark sampler is the zero-wait sampler \cite{sun2017update}.

\subsection{Non-Preemptive System Performance Analysis}

Firstly, we consider the performance comparison of different strategies with exponential transmission time with parameter $\lambda$ and Gamma-distributed computation time. The average transmission time is $\mathbb{E}[T]=\frac{1}{\lambda}$. The gamma distribution has a shape parameter $k$ and a scale parameter $\beta$, and $\mathbb{E}[C]=k\beta$.

The benchmark strategies are as follows:
\begin{itemize}
  \item Weighted Round-Robin Scheduler: In each polling cycle \(K\), source \(m\) is scheduled \(w_m K\) times. An appropriate polling cycle is chosen to ensure that the number of times each source is scheduled within a cycle is an integer. For example, in this case, a feasible scheduling scheme is \(\xi = \{1, 2, 2, 3, 3, 3, 4, 4, 4, 4, 5, 5, 5, 5, 5,...\}\).
  \item MAF Scheduler: The source with the maximum AoI is always scheduled.
  \item Zero-Wait Sampler: In this sampler, new data is generated and sent when the channel is idle and the server queue is empty. In the non-preemptive system, this sampler can be viewed as a special Fixed Multi-Threshold (FMT) strategy with \(\theta^m = 0\); in the preemptive system, it is a particular Stationary Multi-Transmission-Aware (SMT) Sampler with \(g^m(\cdot) = 0\).
\end{itemize}

\begin{figure}[!t]
  \centering
  \includegraphics[width=0.65\linewidth]{./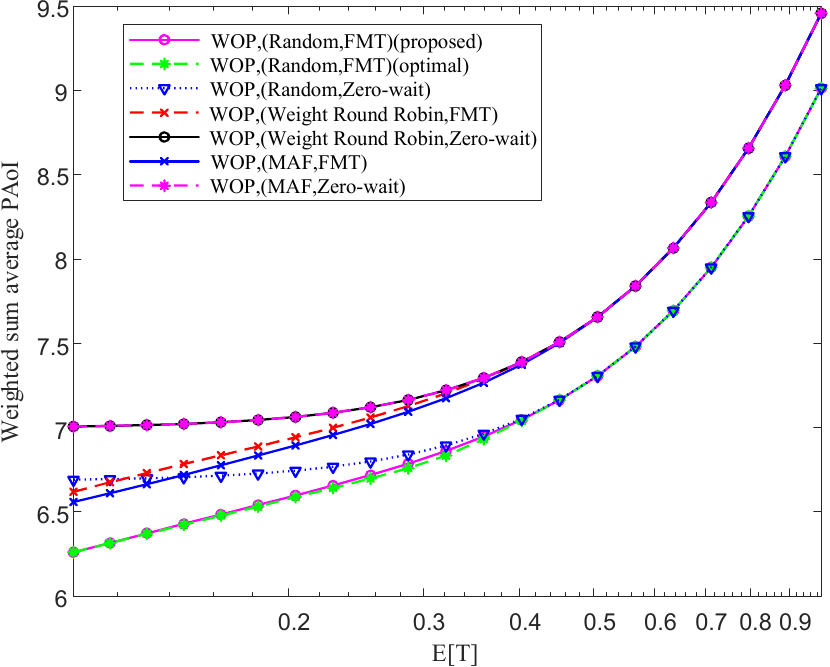}
  \caption{Average weighted PAoI under different $\mathbb{E}[T]$ with exponential transmission time and Gamma-distributed computation time in the non-preemptive system.}\label{fig:MS_TandC_exp5}
\end{figure}

\begin{figure}[!t]
  \centering
  \includegraphics[width=0.65\linewidth]{./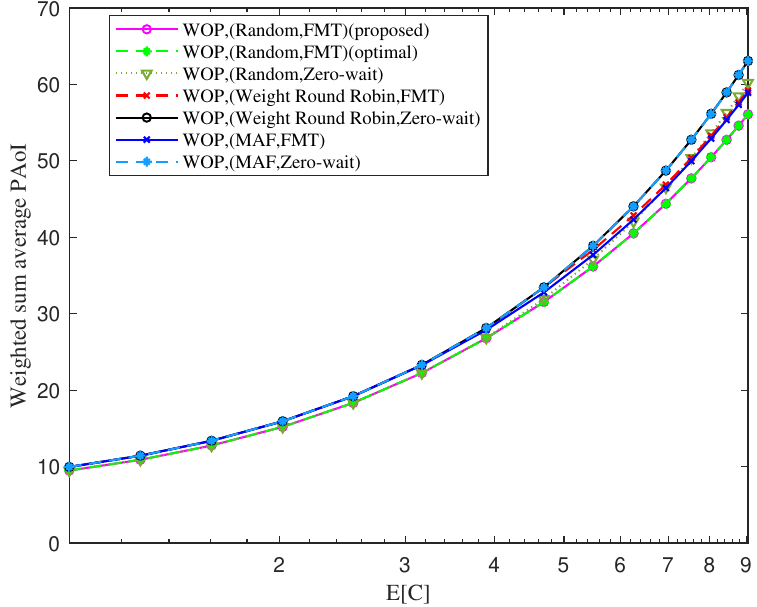}
  \caption{Average weighted PAoI under different $\mathbb{E}[C]$ with exponential transmission time and Gamma-distributed computation time in the non-preemptive system.}\label{fig:MS_TandC_exp5_1}
\end{figure}

\begin{figure}[!t]
  \centering
  \includegraphics[width=0.65\linewidth]{./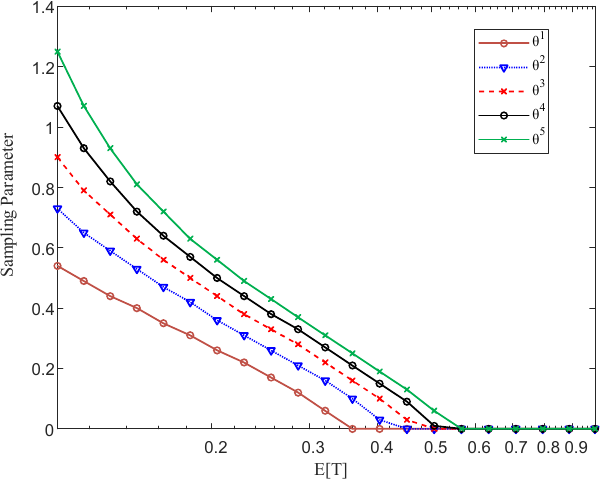}
  \caption{Optimal sampling thresholds in the FMT Sampler under different $\mathbb{E}[T]$ with exponential transmission time and Gamma-distributed computation time in the non-preemptive system.}\label{fig:MS_TandC_exp5_2}
\end{figure}

\begin{figure}[!t]
  \centering
  \includegraphics[width=0.65\linewidth]{./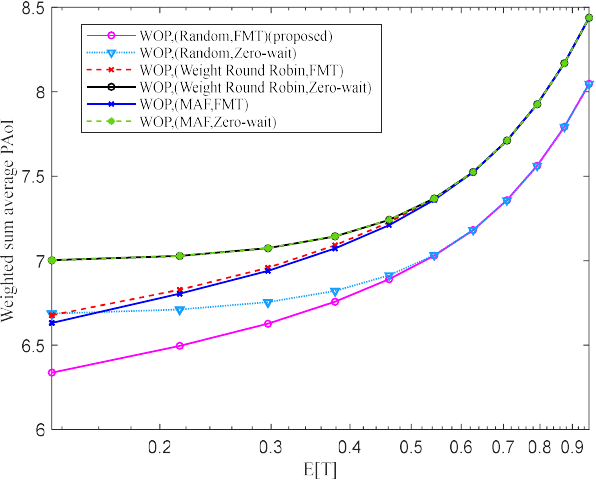}
  \caption{Average weighted PAoI under different $\mathbb{E}[T]$ with Pareto-distributed transmission time and Lognormal-distributed computation time in the non-preemptive system.}\label{fig:MS_TandC_exp5_3}
\end{figure}

{Figs.~\ref{fig:MS_TandC_exp5} and \ref{fig:MS_TandC_exp5_1} show the performance comparison between the proposed scheduler-sampler and other benchmark strategies under varying values of $\mathbb{E}[T]$ and $\mathbb{E}[C]$. In addition to the benchmarks, we also provide the PAoI performance of the optimal policy obtained via exhaustive search. Specifically, we enumerate the parameters of the FMT policy and determine the corresponding scheduler based on Lemma~\ref{lem:optimalf}. It can be observed that the proposed scheduler-sampler achieves performance very close to the optimal policy in most cases.} As \(\mathbb{E}[T]\) increases, the weighted average PAoI for all strategies grows because longer transmission times increase packet age. When the fixed sampler is the FMT Sampler, the random scheduler always outperforms the MAF Scheduler and the Weighted Round-Robin Scheduler, highlighting the impact of scheduling frequencies on PAoI performance. When the fixed scheduler is the random scheduler, the FMT Sampler performs better than the Zero-Wait Sampler, but the gap between them gradually decreases as \(\mathbb{E}[T]\) increases and becomes zero when \(\mathbb{E}[T] > 0.5\). This result can be further explained in Fig.~\ref{fig:MS_TandC_exp5_2}.

Fig.~\ref{fig:MS_TandC_exp5_2} illustrates how the sampling threshold vector \({\theta}\) in the optimal FMT Sampler within the optimal scheduler-sampler vary with \(\mathbb{E}[T]\). We observe that as \(\mathbb{E}[T]\) increases, all the thresholds gradually decrease and reduce to zero when \(\mathbb{E}[T] > 0.5\). At this point, the FMT Sampler degenerates into a Zero-Wait Sampler, which explains the results in Fig.~\ref{fig:MS_TandC_exp5}. Additionally, we can observe that as \(w_m\) increases, \(\theta^m\) also increases under fixed $\mathbb{E}[T]$. This can be explained by problem \eqref{def:probnp_tsub}: \(\theta^m\) determines the waiting time \(\mathbb{E}[W^m]\) and the variable \(\mathbb{E}[\min\{C,\theta^m\}]\), which affects the average inter-arrival time of data from source \(m\). When \(w_m\) increases, the waiting time \(\mathbb{E}[W^m]\) has a greater impact on PAoI performance, so the system tends to reduce the waiting time of data from source \(m\) in the queue, which can be achieved by increasing \(\theta^m\) according to \eqref{313}.

{We further study the performance comparison under Pareto-distributed transmission time and Lognormal-distributed computation time as shown in Fig.~\ref{fig:MS_TandC_exp5_3}. It can be seen that the performance trends are very similar to Fig.~\ref{fig:MS_TandC_exp5}. Also, our proposed policy outperforms all the other benchmarks, and converges to zero-wait sampling policy as \(\mathbb{E}[T]\) becomes large.}

\subsection{Preemptive System Performance Analysis}
\begin{figure}[!t]
  \centering
  \includegraphics[width=0.65\linewidth]{./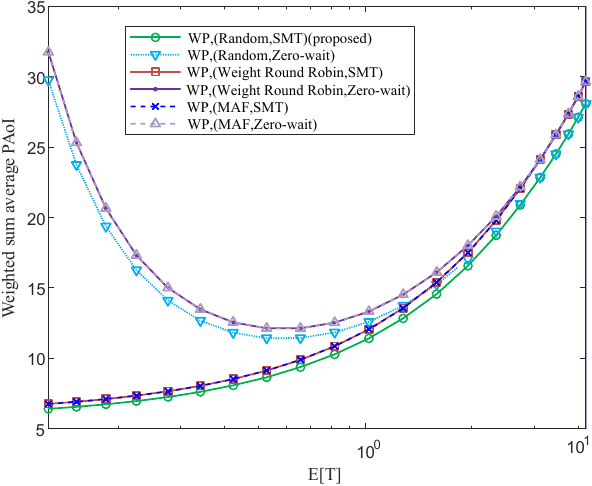}
  \caption{Average weighted PAoI under different $\mathbb{E}[T]$ with exponential transmission time and Gamma-distributed computation time in the preemptive system.}\label{fig:MS_TandC_exp6}
\end{figure}

\begin{figure}[!t]
  \centering
  \includegraphics[width=0.65\linewidth]{./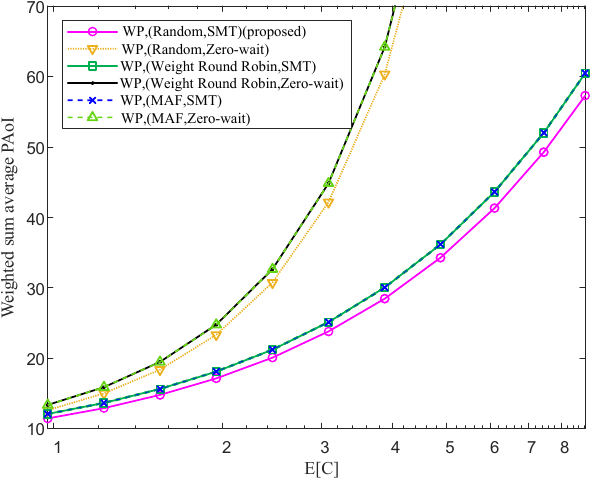}
  \caption{Average weighted PAoI under different $\mathbb{E}[C]$ with exponential transmission time and Gamma-distributed computation time in the preemptive system.}\label{fig:MS_TandC_exp6_1}
\end{figure}

\begin{figure}[!t]
  \centering
  \includegraphics[width=0.65\linewidth]{./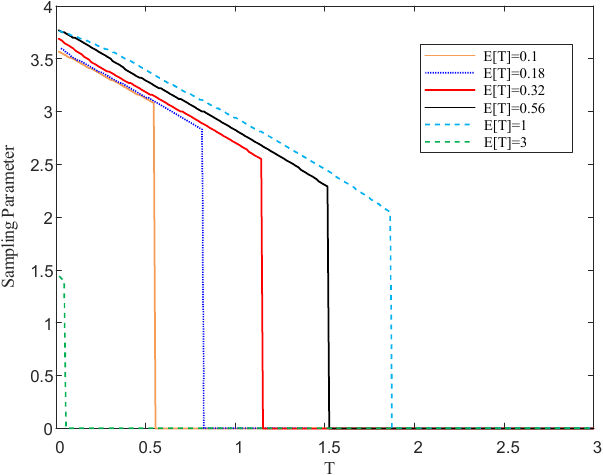}
  \caption{Optimal sampling function in the SMT Sampler under different $\mathbb{E}[T]$ with exponential transmission time and Gamma-distributed computation time in the preemptive system.}\label{fig:MS_TandC_exp6_2}
\end{figure}

\begin{figure}[!t]
  \centering
  \includegraphics[width=0.65\linewidth]{./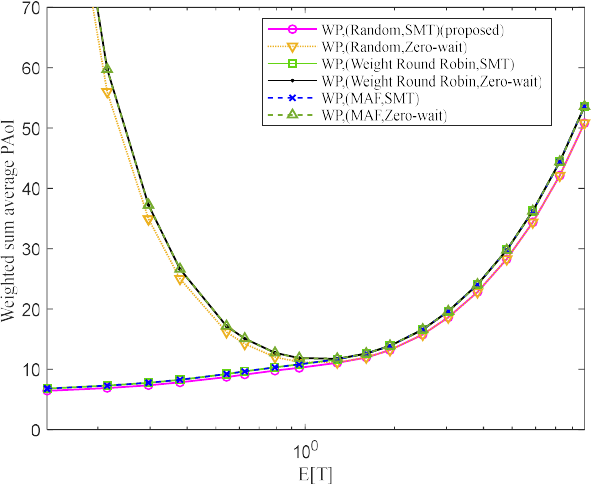}
  \caption{Average weighted PAoI under different $\mathbb{E}[T]$ with Pareto-distributed transmission time and Lognormal-distributed computation time in the preemptive system.}\label{fig:MS_TandC_exp6_3}
\end{figure}

{The performance comparison in the preemptive system with the exponential transmission time and Gamma-distributed computation time is shown in Figs.~\ref{fig:MS_TandC_exp6} and \ref{fig:MS_TandC_exp6_1}. It can be seen that, consistent with the analytical results, the random scheduler with an SMT Sampler achieves the optimal performance}. We observe that when the sampler is fixed as the Zero-Wait Sampler, the average PAoI decreases first and then increases as \(\mathbb{E}[T]\) increases. This is because, in the preemptive system, when the transmission time \(\mathbb{E}[T]\) is small, the Zero-Wait Sampler causes frequent data preemption, which increases the AoI. Therefore, the smaller the \(\mathbb{E}[T]\), the higher the average PAoI. When \(\mathbb{E}[T]\) is large, the impact of the preemption process is reduced, so the average PAoI increases as \(\mathbb{E}[T]\) increases. When the scheduler is fixed as the random scheduler, the gap between the SMT Sampler and the Zero-Wait Sampler is large when \(\mathbb{E}[T]\) is small. When \(\mathbb{E}[T]\) is large on the contrary, the performance gap vanishes. This indicates that the SMT Sampler degenerates into a Zero-Wait Sampler when \(\mathbb{E}[T]\) is large, which can be verified in Fig.~\ref{fig:MS_TandC_exp6_2}.

Fig.~\ref{fig:MS_TandC_exp6_2} illustrates how the optimal SMT sampler varies with the transmission time \(T\) under different values of \(\mathbb{E}[T]\). We use the sampling function \(g^1(T)\) as an example to demonstrate this relationship. We observe that \(g^1(T)\)  decreases to zero as \(T\) increases. This is an intuitive result: the larger \(g^1(T)\), the greater inter-generation times, reducing the likelihood of old data being preempted. Therefore, the system tends to allow data with shorter transmission time to be successfully transmitted, which results in a lower average PAoI. We observe that when \(\mathbb{E}[T] < \mathbb{E}[C] = 1\), as \(\mathbb{E}[T]\) increases, \(g^1(T)\)  gradually increases. This indicates that the preemption process is the primary factor affecting the system's PAoI performance at this stage. As $\mathbb{E}[T]$ increases, the system tends to increase \(g^1(T)\), which increases the average inter-generation time but reduces the preemption process. Since the preemption process is the main factor at this point, the average PAoI decreases as a result. We also observe that when \(\mathbb{E}[T] = 3\), \(g^1(T)\)  is non-zero only when \(T\) is very small. When \(\mathbb{E}[T]\) increases further, the SMT Sampler degenerates into a Zero-Wait Sampler, which validates the results in Fig.\ref{fig:MS_TandC_exp6} and aligns with the behavior in non-preemptive systems.

We also observe that as \(\mathbb{E}[T]\) increases, \(g^1(T)\)  gradually decreases to zero. This indicates that when \(\mathbb{E}[T]\) is large, the SMT Sampler degenerates into a Zero-Wait Sampler, which validates the results in Fig.\ref{fig:MS_TandC_exp6} and aligns with the behavior in non-preemptive systems. This demonstrates that in the preemptive system, as \(\mathbb{E}[T]\) increases, the impact of the preemption process on the average PAoI diminishes, making the increase in the average inter-arrival time due to the increase in \(\mathbb{E}[T]\) the primary factor for the increase in PAoI. Therefore, the sampler reduces \(g^1(T)\)  to minimize the average PAoI. This phenomenon can also be observed in single-source preemptive system\cite{jian2024opt}.

{Fig.~\ref{fig:MS_TandC_exp6_3} further shows the performance comparison between the proposed scheduler-sampler and other benchmark strategies under Pareto-distributed transmission time and Lognormal-distributed computation time. The results also demonstrate the similar performance trends and relations among different strategies as in Fig.~\ref{fig:MS_TandC_exp6}.}

\section{Conclusions}\label{sec:con}

In this paper, it is found that a random scheduler can achieve optimal performance in both preemption and non-preemption cases and the optimal scheduling frequencies are derived. For the non-preemptive system, we find that the optimal sampler is a fixed multi-threshold sampler. Then, an alternating optimization algorithm is proposed to compute the parameters. For the preemptive system, we prove that a stationary multi-transmission-aware sampler is optimal. Based on this, we use the Coordinate Descent and Dinkelbach algorithm to compute the sampler. In the numerical simulations, the proposed algorithms ourperform other benchmark scheduler-samplers, and perform close to the optimal. It is also observed that the optimal sampler degenerates into a Zero-Wait Sampler when the average transmission time is large.

\appendix

\subsection{Proof of Theorem~\ref{The:WOP}}\label{proof:The:WOP}

We start from problem~\eqref{def:probnp}. Since we only consider update transmission when the queue is empty, the transmitter must wait for a certain period $\Theta$ after old data starts to be processed. Furthermore, we use $\Theta_i$ to denote the time interval before the transmission of the $i$-th update, where $\Theta_i=Z_{i-1}-T_{i-1}-W_{i-1}$. Then, we have $W_{i-1}=\max\{0,Z_{i-1}-\Theta_i-T_{i-1}\}$, and use $\Theta_k^m$ to represent the time interval before the transmission of update $i_k^m$. Since $T$ and $W$ are system parameters, the sampler $\pi = \{Z_1, Z_2, ...\}$ can be redefined as $\pi = \{\Theta_1, \Theta_2, ...\}$. The scheduler $\xi = \{m_1, m_2, ...\}$ determines the source of each update. Therefore, we can represent the scheduler as $\xi=\{i_k^m|m\in\{1,...,M\},k\in\{1,2,...\}\}$. A scheduler-sampler pair can be represented as $d=(i_k^m,\Theta_k^m)$, where $m\in\{1,...,M\},k\in\{1,2,...\}$. With this definition, the weighted sum PAoI for problem~\eqref{def:probnp} is:
\begin{align}
  \bar{P}_{d}&=\sum_{m=1}^{M}w_m\lim_{N_m \to \infty} \frac{1}{N_m}\sum_{k=1}^{N_m}\mathbb{E}\left[P_k^m \right]\label{proof_The:WOP_OS_a}\\
  &=\sum_{m=1}^{M}w_m\lim_{N \to \infty} \frac{\sum_{k=1}^{f_m N}\mathbb{E}[X_k^m+A_k^m]}{f_m N} \label{proof_The:WOP_OS_b}\\
  &=\sum_{m=1}^{M}w_m\lim_{N \to \infty}\left[\frac{\sum_{i=0}^{N-1}\mathbb{E}[Z_i]}{f_m N}\!+\!\frac{\sum_{k=1}^{f_m N}\mathbb{E}[T_k^m\!+\!C_k^m]}{f_m N}\right.\nonumber\\
  &\quad\quad+\left.\frac{\sum_{k=1}^{f_m N}\mathbb{E}[\max\{0,C_{i_k^m-1}-\Theta_{k}^m-T_{i_k^m}\}]}{f_m N}\right]\label{proof_The:WOP_OS_c}\\
  &=\sum_{m=1}^{M}w_m\left[\lim_{N \to \infty} \frac{\sum_{i=0}^{N-1}\mathbb{E}[Z_i]}{f_m N}+\mathbb{E}[T]+\mathbb{E}[C]\right.\nonumber\\
  &\quad\quad +\left.\frac{\sum_{k=1}^{f_m N}\mathbb{E}[\max\{0,C_{i_k^m-1}-\Theta_{k}^m-T_{i_k^m}\}]}{f_m N}\right],\label{proof_The:WOP_OS_d}
\end{align}
where $N$ is the total number of updates sent by all sources, \eqref{proof_The:WOP_OS_b} is obtained based on \eqref{def:f}; \eqref{proof_The:WOP_OS_c} holds for two reasons. First, from \eqref{equ:zk} and \eqref{equ:xk}, we obtain  
\begin{align}  
  \textstyle \sum_{k=1}^{f_m N}X_k^m=\sum_{i=0}^{N-1}Z_i. \label{equ:2xk}  
\end{align}  
Second, by substituting $\Theta_k^m$ into $W_k^m$ and adjusting the subscripts, the relationship is further validated; \eqref{proof_The:WOP_OS_d} holds because the transmission time and the computation time from different sources follow the same distribution.

Next, we formulate a new  scheduler-sampler, where the scheduler has the same scheduling frequencies $\{f_m|m=1,...,M\}$ as the original one. The sampler maintains the same $\Theta_k^m$ as the original sampler (specifically, the sampler keeps track of the number of times each source has been scheduled, then determines the value of $\Theta_k^m$ based on the source index $i$ selected by the scheduler and the number of times source $m$ has been selected before, i.e., $k-1$). Therefore, the new scheduler-sampler can be represented as $\widetilde{d}=({\widetilde{i}}_k^m,\Theta_k^m),m\in\{1,\ldots,M\},j\in\{1,2,\ldots\}$. Under $\widetilde{d}$, the weighted sum PAoI is 
\begin{align*}
  \bar{P}_{\widetilde{d}}&=\sum_{m=1}^{M}w_m\left[\lim_{N \to \infty} \frac{\sum_{i=0}^{N-1}\mathbb{E}[Z_i]}{f_m N}+\mathbb{E}[T]+\mathbb{E}[C]\right.\\
  &\quad +\left.\frac{\sum_{k=1}^{f_m N}\mathbb{E}[\max\{0,C_{\widetilde{i}_k^m-1}-\Theta_{k}^m-T_{\widetilde{i}_k^m}\}]}{f_m N}\right]\\
  &=\sum_{m=1}^{M}w_m\left[\lim_{N \to \infty} \frac{\sum_{i=0}^{N-1}\mathbb{E}[Z_i]}{f_m N}+\mathbb{E}[T]+\mathbb{E}[C]\right.\\
  &\quad +\left.\frac{\sum_{k=1}^{f_m N}\mathbb{E}[\max\{0,C_{i_k^m-1}-\Theta_{k}^m-T_{i_k^m}\}]}{f_m N}\right] = \bar{P}_{d},
\end{align*}
where the first equation is because $\widetilde{d}$ maintains $f_m$ unchanged; the second equation is because $T$ and $C$ are both i.i.d., changing the indices does not affect the expectation. Therefore, we have successfully transformed an arbitrary causal scheduler-sampler into a random scheduler-sampler with the same performance. Thus, the proof for problem~\eqref{def:probnp} is complete.

Next we consider problem~\eqref{def:probp}. The weighted sum PAoI for problem~\eqref{def:probp} is:

\begin{align}
\bar{P}_d =&\sum_{m=1}^{M}w_m\lim_{N_m \to \infty} \frac{1}{N_m}\sum_{n=1}^{N_m}\mathbb{E}\left[P_{k_n}^m \right] \nonumber\\
=&\sum_{m=1}^{M}w_m \frac{\lim_{N_m \to \infty}\sum_{n=1}^{N_m}\mathbb{E}\left[X_{k_n}^m+T_{k_n}^m+C_{k_n}^m \right]}{\lim_{N \to \infty}\mathbb{E} \left[ \sum_{n=1}^{N} \mathbf{1}_{\Omega_n^m} \mathbf{1}_{\Omega_n} \right]}\label{equ:xa00}\\
=& \sum_{m=1}^{M} w_m \lim_{N \to \infty} \frac{\sum_{n=1}^{N} \mathbb{E} \left[ Z_{n-1} + (T_n + C_n) \mathbf{1}_{\Omega_n^m} \mathbf{1}_{\Omega_n} \right]}{\mathbb{E} \left[ \sum_{n=1}^{N} \mathbf{1}_{\Omega_n^m} \mathbf{1}_{\Omega_n} \right]} \label{equ:xa0}\\
=& \sum_{m=1}^{M} w_m  \lim_{N \to \infty}\left[ \frac{\sum_{n=1}^{N} \mathbb{E} \left[ Z_{n-1} \right]}{\sum_{k=1}^{f_m N} \mathbb{E} [\mathbf{1}_{\Omega_{i_k^m}}]} \right.  \nonumber\\
&\quad\quad\quad +\left. \frac{\sum_{k=1}^{f_m N} \mathbb{E} \left[ (T^m_k + C^m_k) \mathbf{1}_{\Omega_{i_k^m}} \right]}{\mathbb{E} \left[ \sum_{k=1}^{f_m N} \mathbf{1}_{\Omega_{i_k^m}} \right]}\right]\label{equ:xa}\\
= &\sum_{m=1}^{M} w_m  \lim_{N \to \infty} \left[\frac{\sum_{n=1}^{N} \mathbb{E} \left[ X_n \right]}{\sum_{k=1}^{f_m N} \mathbb{E} [\mathbf{1}_{\{C_{i_k^m} \le \Theta^m_k + T_{i_k^m+1}\}}]}\right.  \nonumber\\
&+ \left. \frac{\sum_{k=1}^{f_m N} \mathbb{E} \left[ (T_{i_k^m} \!+\! C_{i_k^m}) \mathbf{1}_{\{C_{i_k^m} \le \Theta^m_k + T_{i_k^m+1}\}} \right]}{\sum_{k=1}^{f_m N} \mathbb{E} [\mathbf{1}_{\{C_{i_k^m} \le \Theta^m_k \!+\! T_{i_k^m+1}\}}]}\right],\label{equ:xb}
\end{align}
where \eqref{equ:xa00} follows from \eqref{equ:Pk} and \eqref{equ:Ak}; \eqref{equ:xa0} follows from \eqref{equ:2xk}; in \eqref{equ:xa} we only consider packets from source \(m\), so the index \(k\) refers to the index of the data generated by a certain source.

Then, we construct a new scheduling-sampler pair, where the new scheduler is a random scheduler with \(\bm{f}\) as the scheduling frequency, and the new sampler maintains \(\Theta_k^m\) the same as the original scheduling-sampler according to the scheduler's selection results. Therefore, the new scheduling-sampler can be represented as \(\widetilde{d} = (\widetilde{i}_k^m, \Theta_k^m), m \in \{1, \ldots, M\}, k \in \{1, 2, \ldots\}\). Under the new scheduling-sampler, the weighted average peak age is  
\begin{align}
\bar{P}_{\widetilde{d}} &= \sum_{m=1}^{M} w_m \lim_{N \to \infty}  \left[\frac{\sum_{n=1}^{N} \mathbb{E} \left[ X_n \right]}{\sum_{k=1}^{\widetilde{f}_m N} \mathbb{E} [\mathbf{1}_{\{C_{\widetilde{i}_k^m} \le \Theta^m_k + T_{\widetilde{i}_k^m+1}\}}]} \right.\nonumber \\
&\quad +  \left.  \frac{\sum_{n=1}^{\widetilde{f}_m N} \mathbb{E} \left[ (T_{\widetilde{i}_k^m} + C_{\widetilde{i}_k^m}) \mathbf{1}_{\{C_{\widetilde{i}_k^m} \le \Theta^m_k + T_{\widetilde{i}_k^m+1}\}} \right]}{\sum_{k=1}^{\widetilde{f}_m N} \mathbb{E} [\mathbf{1}_{\{C_{\widetilde{i}_k^m} \le \Theta^m_k + T_{\widetilde{i}_k^m+1}\}}]} \right]\label{equ:xc} \\
&= \sum_{m=1}^{M} w_m  \lim_{N \to \infty}\left[ \frac{\sum_{n=1}^{N} \mathbb{E} \left[ X_n \right]}{\sum_{k=1}^{f_m N} \mathbb{E} [\mathbf{1}_{\{C_{i_k^m} \le \Theta^m_k + T_{i_k^m+1}\}}]} \right.\nonumber \\
&\quad + \left.  \frac{\sum_{n=1}^{f_m N} \mathbb{E} \left[ (T_{i_k^m} + C_{i_k^m}) \mathbf{1}_{\{C_{i_k^m} \le \Theta^m_k + T_{i_k^m+1}\}} \right]}{\sum_{k=1}^{f_m N} \mathbb{E} [\mathbf{1}_{\{C_{i_k^m} \le \Theta^m_k + T_{i_k^m+1}\}}]} \right] \label{equ:xd}\\
&= \bar{P}_d,\nonumber
\end{align}
where \eqref{equ:xc} holds because the new sampler maintains \(\Theta_k^m\) unchanged. Eq.~\eqref{equ:xd}  holds because 1) the random scheduler uses \(\bm{f}\) as the scheduling frequency, so its scheduling frequency \(\widetilde{f}_m = f_m\), and 2) since \(T\) and \(C\) are independent variables, although the index change may cause changes in their values, it does not affect the calculation of the expectation. Therefore, we successfully obtained a set of random schedulers and samplers with the same performance from any set of causal scheduling-samplers. This completes the proof.

\subsection{Proof of Lemma~\ref{lem:maxZ}}\label{proof:lem:maxZ}

We prove this lemma by contradiction. We begin with the non-preemptive case. The preemptive case can be proved in the same way. Assume that under the optimal sampler $\pi \in \Pi$, the $(i+1)$-th packet is submitted $G$ seconds after the arrival of the $i$-th packet, i.e., $s_{i+1} = r_i + G$. We then construct a new sampler $\pi'$ such that for all $j < i$, $s'_j = s_j$, and for all $j \ge i$, $s'_j = s_j - G$. In this new sampler, the $(i+1)$-th packet is submitted immediately upon the arrival of the $i$-th packet.  

The sampler $\pi'$ does not affect packets prior to the $i$-th packet; thus, for all $j < i$, $W'_j = W_j$ and $Z'_j = Z_j$. However, $\pi'$ advances the submission of the $k$-th packet and all subsequent packets. In $\pi'$, even though the $i$-th packet is submitted earlier, it is still sent when the queue is empty and the edge server is idle. Consequently, the behavior of the $i$-th packet and subsequent packets remains unchanged. In the non-preemptive case, this implies that the waiting times of the packets remain the same. In the preemptive case, it ensures that the preemption relationships among all packets are preserved. Therefore, for all $j > i$, we have $W'_j = W_j$, $Z'_j = Z_j$, and for the $i$-th packet, $W'_i = W_i$ and $Z'_i = Z_i - G$.

From \eqref{equ:Pk} and \eqref{equ:Ak}, $P_k^m$ is given as $P_k^m = X_k^m + T_k^m + W_k^m + C_k^m$. Since $W_i = W_i'$ holds for all $i$ and the scheduler remains unchanged, it follows that $(W_k^m)' = W_k^m$ for all $k$ and $m$. Furthermore, \eqref{equ:2xk} yields  
\begin{align*}
  \sum_{n=1}^{N}(X_n^{m})' &= \sum_{n=0}^{N-1}Z_n' = \sum_{n=0}^{N-1}Z_n - G = \sum_{n=1}^{N}X_n^{m} - G.  
\end{align*}  
Given $P_k^m = X_k^m + T_k^m + W_k^m + C_k^m$ from \eqref{equ:Pk} and \eqref{equ:Ak}, the weighted PAoI of the new sampler satisfies $\overline{P}_{d^{'}} < \overline{P}_{d}$. This result contradicts the assumption that $\pi$ is the optimal sampler. Therefore, the optimal policy must satisfy $s_i \le r_{i-1}$ for all $i = 1, 2, \ldots$. From \eqref{equ:dk} and \eqref{equ:zk}, the inequality $Z_i = s_{i+1} - s_i \le r_i - s_i = T_i + W_i + C_i$ is obtained.  This completes the proof.

\subsection{Proof of Theorem~\ref{The:WOP_FMT}}\label{proof:The:WOP_FMT}
We complete the proof in two steps. In the first step, we find the optimal sampler from all causal continuous working samplers. The proof is similar to that of Theorem 1 in \cite{jian2024opt}. We modify it to fit our multi-source system settings.

\textbf{Step 1:} Based on the proof in Appendix~\ref{proof:The:WOP}, we still use $\Theta_i$ to represent the time between the start of the computation of packet $i$ and the transmission of packet $i+1$. In a causal continuous working sampler, we have $Z_i = T_i + W_i + \min\{C_i, \Theta_i\}$ and $\Theta_i$ is a random value determined by all historical information before the transmission of packet $i+1$, i.e., $\{S_0, T_0, C_0, W_0, m_0, \cdots, S_{i\!-\!1}, T_{i\!-\!1}, C_{i\!-\!1}, W_{i\!-\!1}, m_{i\!-\!1}\}$, denoted as $\mathcal{H}_i$, transmission time $T_i$, waiting time $W_i$, and the scheduling results $m_i,m_{i+1}$ of packet $i$. Note that under the stochastic scheduler, the sampler is aware of the source of packet $i+1$ when transmitting it, i.e., $m_{i+1}$. Specifically, the sampler first observes $\mathcal{H}_i$, $T_i$, $W_i$, $m_i$ and $m_{i+1}$, then selects $\Theta_i$ via a conditional probability $ p_{\Theta}(\Theta_i | T_i , W_i , m_i,m_{i+1}, \mathcal{H}_i)$. According to the definition of $W_i$, we have
\begin{align}  
W_i = \max\{0, C_{i-1} - \Theta_{i-1} - T_i\}. \label{equ:Wkml}  
\end{align}  
Since $T_i$ and $C_i$ are independent of the sampler, the sampler can equivalently be written as $\pi = \{\Theta_i, i \ge 1\}$. For simplicity, we define a special value for packet $i$ as  
\begin{align}  
R_i^m \triangleq \min\{C_i, \Theta_i\} + W_{i+1}(1 + \mathbf{1}_{\Omega_{i+1}^m}), \label{def:Rkml}  
\end{align}  
where $\Omega_i^m$ is the event that packet $i$ comes from source $m$. According to \eqref{equ:Wkml}, $T_{i+1}$ also affects $R_i^m$. For packet $i$ and fixed history $\mathcal{H}_i$, we group all state update sample paths with the same $T_i, W_i, m_i$, and statistically average all these paths to get the following average $R_i^m$:  
\begin{equation}  
\widehat{R}_i(m, \gamma, \eta, l, \pmb{h}) \triangleq \mathbb{E}\left[R_i^m | T_i \!=\! \gamma, W_i \!=\!\eta,m_i\!=\!l, \mathcal{H}_i \!= \!\pmb{h}\right], \nonumber
\end{equation} 
where the expectation is over $\Theta_i, C_i$, and $T_{i+1}$.  

\begin{figure*}[!t]
  \normalsize
  \begin{align}
    &\bar{P}_d = \sum_{m=1}^{M} w_m \lim_{N_m \to \infty} \frac{1}{N_m} \sum_{k=1}^{N_m} \mathbb{E}\left[P_k^m \right] = \sum_{m=1}^{M} w_m \lim_{N \to \infty} \frac{\sum_{i=1}^{N} \mathbb{E}[Z_{i-1} + (T_i+W_i+C_i) \mathbf{1}_{\Omega_{i}^m}]}{f_m N} \label{equ:mla} \\
    &= \sum_{m=1}^{M} w_m \lim_{N \to \infty} \frac{\sum_{i=0}^{N-1} \mathbb{E}[T_i + W_i + \min\{C_i, \Theta_i\} + (T_{i+1} + W_{i+1} + C_{i+1}) \mathbf{1}_{\Omega_{i+1}^m}]}{f_m N} \label{equ:mlb} \\
    &= (1 \!+\! \sum_{m=1}^{M} \frac{w_m}{f_m}) \mathbb{E}[T] \!+\! \mathbb{E}[C] \!+\! \sum_{m=1}^{M} w_m \lim_{N \to \infty} \frac{\sum_{i=0}^{N-1} \mathbb{E}[\min\{C_i, \Theta_i\}\! +\! W_{i+1}(1 + \mathbf{1}_{\Omega_{i+1}^m})]}{f_m N}  \!+\! \lim_{N \to \infty} \frac{\mathbb{E}\left[T_0 \!+\! W_0 \!-\! T_N \!-\! W_N\right]}{N} \label{equ:mlc} \\
    &= (1 + \sum_{m=1}^{M} \frac{w_m}{f_m}) \mathbb{E}[T] + \mathbb{E}[C] + \sum_{m=1}^{M} w_m \lim_{N \to \infty} \frac{\sum_{i=0}^{N-1} \mathbb{E}[\min\{C_i, \Theta_i\} + W_{i+1}(1 + \mathbf{1}_{\Omega_{i+1}^m})]}{f_m N} \label{equ:mld} \\
    &= (1 + \sum_{m=1}^{M} \frac{w_m}{f_m}) \mathbb{E}[T] + \mathbb{E}[C] + \sum_{m=1}^{M} w_m \lim_{N \to \infty} \frac{\sum_{i=0}^{N-1} \mathbb{E}_{\mathcal{H}_i}\left[\mathbb{E}_{T_i}\left[\mathbb{E}_{m_i}\left[\mathbb{E}_{W_i}\left[\widehat{R}_i(m, T_i, W_i,m_i, \mathcal{H}_i) | W_i\right] |m_i\right]| T_i\right] | \mathcal{H}_i\right]}{f_m N}  \label{equ:mle} \\
    &\ge (1 + \sum_{m=1}^{M} \frac{w_m}{f_m}) \mathbb{E}[T] + \mathbb{E}[C] + \sum_{m=1}^{M} w_m \lim_{N \to \infty} \frac{\sum_{i=0}^{N-1} \mathbb{E}_{\mathcal{H}_i}\left[R^*(m, \mathcal{H}_i) | \mathcal{H}_i\right]}{f_m N} \label{equ:mlf} \\
    &\ge (1 + \sum_{m=1}^{M} \frac{w_m}{f_m}) \mathbb{E}[T] + \mathbb{E}[C] + \sum_{m=1}^{M} w_m \lim_{N \to \infty} \frac{\sum_{i=0}^{N-1} R_{\min}(m)}{f_m N} \label{equ:mlg} \\
    &= (1 + \sum_{m=1}^{M} \frac{w_m}{f_m}) \mathbb{E}[T] + \mathbb{E}[C] + \sum_{m=1}^{M} R_{\min}(m) \frac{w_m}{f_m}, \label{equ:mlh}
    \end{align}  
  \hrulefill
  \vspace*{2pt}
\end{figure*}

Next, we derive the lower bound of the weighted average PAoI as shown in \eqref{equ:mla}-\eqref{equ:mlh}: \eqref{equ:mla} follows from \eqref{equ:xk}, \eqref{equ:Pk} and \eqref{equ:2xk}; \eqref{equ:mlc} follows from grouping random variables with the same indices and the fact that \(T_i\) and \(C_i\) are identically distributed; \eqref{equ:mld} follows from the fact that based on \eqref{equ:Wkml}, we have \(T_i + W_i \le C_i\), and thus, $\lim_{N \to \infty} \frac{\mathbb{E}\left[T_0 + W_0 - T_N - W_N\right]}{N} = 0$. Eq.~\eqref{equ:mle} follows from \eqref{def:Rkml}; in \eqref{equ:mlf}, \(R^*(m, \mathcal{H}_i)\) represents the minimum value of \(\widehat{R}_i(m, \gamma, \eta, l, \pmb{h})\) over all possible \(T_i,W_i\) and \(m_i\); in \eqref{equ:mlg}, \(R_{\min}(m)\) represents the minimum of \(R^*(m, \mathcal{H}_i)\) over all packets and their corresponding histories, i.e., taking the minimum over all \(i\) and \(\mathcal{H}_i\).

Note that in the policy that achieves \(R_{\min}(m)\), \(\Theta_i\) is determined by a distribution \(p(\Theta|m)\triangleq p_{\Theta}(\Theta_i | T_i = \gamma, W_i = \eta, m_i=l,m_{i+1}=m, \mathcal{H}_i = \pmb{h})\), for a fixed condition, i.e., fixed values of \(T_i\), \(W_i,m_i\), and \(\mathcal{H}_i\). Note that \(T_i\) is identically distributed. Therefore, if we directly apply the distributions achieving \(R_{\min}(m)\) for all $m$ to all packets without considering other information, all inequalities in \eqref{equ:mla}-\eqref{equ:mlh} will become equalities. Under such a random policy, $\Theta_i$ depends only on $m_{i+1}$. The new sampler achieves the lower bound \eqref{equ:mlh} of all causal continuous working samplers.

\textbf{Step 2:} In a policy \( d \) consisting of a random scheduler \( \xi_{\text{R}} \) and a new sampler \( \pi \) which achieves the lower bound \eqref{equ:mlh}, we have $d$ consisting of a random scheduler and a stochastic sampler that depends solely on $m_{i+1}$. Based on \eqref{equ:Wkml} and \eqref{equ:zk}, we have $\{T_i,W_i,Z_i,m_i\}$ is stationary and ergodic, resulting in ${X_k^m,A_k^m,P_k^m}$ being stationary and ergodic as well. Consequently, we can remove the subscript \( k \) in \eqref{def:probnp} as:

\begin{align*}
  \bar{P}_{(\Theta_{R},\pi)}= \sum_{m=1}^{M} w_m \mathbb{E}\left[\lim_{N_m \to \infty} \frac{1}{N_m} \sum_{n=1}^{N_m} P_n^m \right]=\sum_{m=1}^{M} w_m\mathbb{E}\left[P^m\right].
\end{align*}

We use \(\bar{P}_{\xi_{\text{R},\text{FMT}}}(\theta^1, \ldots, \theta^M)\) to represent the average weighted sum PAoI under a random scheduler and a fixed multi-threshold sampler. Then, we obtain
\begin{align*}
\bar{P}_{(\Theta_{R}, \pi)} &= \sum_{m=1}^{M}w_m\mathbb{E}\left[P^m\right] \\
&=  \int_{0}^{\infty} p_{\Theta}^1(x_1) \ldots \int_{0}^{\infty} p_{\Theta}^M(x_M) \\
&\quad\quad\quad\bar{P}_{(\Theta_{R}, \pi_{\text{FMT}})}(\theta_1, \ldots, \theta_M) \, dx_M \ldots \, dx_1 \\
&\ge \mathop{\min}_{\theta_1, \ldots, \theta_M} \bar{P}_{(\Theta_{R}, \pi_{\text{FMT}})}(\theta_1, \ldots, \theta_M).
\end{align*}
Suppose \({\theta}^*  = \{\theta_1^*, \ldots, \theta_M^*\}\) achieves the minimum in the above inequality. Therefore, we construct a fixed multi-threshold sampler \(\pi' \in \Pi_{\text{RMT}}\) such that \(p_{\Theta}^m(\Theta' = \theta^*) = 1\), which achieves the lower bound \eqref{equ:mlh}. Hence, the performance of any causal continuous working sampler is surpassed by a fixed multi-threshold sampler. This completes the proof.

\subsection{Proof of Theorem~\ref{The:WP_SD}}\label{proof:The:WP_SD}
We prove it in two steps: first, we prove that the optimal policy for problem~\eqref{def:probp2} is stationary, and then we prove that the optimal policy is deterministic under the assumption of stationarity.

\textbf{Step 1:} Under a continuously working sampler $Z_i=T_i+\min\{C_i,\Theta_i\}$, where $\Theta_i$ is a random variable determined by all the historical information prior to packet $i$, denoted by $\{S_0,T_0,C_0,m_0,\cdots,S_{i-1},T_{i-1},C_{i-1},m_{i-1}\}$, the scheduling results $m_{i},m_{i+1}$ and $T_i$ is the transmission time of packet $i$. Specifically, the source first observes $\mathcal{H}_i$, $T_i$, $m_{i}$ and $m_{i+1}$, and then selects $\Theta_i$ through the conditional probability $p_{\Theta}(\Theta_i|T_i=\gamma,m_i=m,m_{i+1}=l,\mathcal{H}_i=\pmb{h})$. Let $\Omega_i$ denote the event that packet $i$ is successfully delivered, i.e., $\Omega_i = \{C_i\le \Theta_i+T_{i+1}\}$. Define a random value as
\begin{equation}
    R_i^m(\Theta_i)=Z_{i-1}+(T_i+C_i)\mathbf{1}_{\Omega_i}\mathbf{1}_{\Omega_i^m}.\label{def:Riml}
\end{equation}
where $\Omega_i^m$ represents the event that packet $i$ comes from source $m$. For a fixed history $\mathcal{H}_i$, we group all state update sample paths with the same $m_{i+1}$, and statistically average over all these paths to obtain the following average value for packet $i$:
\begin{equation}
    \widehat{R}_i(m,l,\pmb{h}) \triangleq \mathbb{E}\left[R_i^m(\Theta_i)|m_{i+1}=l,\mathcal{H}_i=\pmb{h}\right],\label{def:hatRml}
\end{equation}
where the expectation is taken over $\Theta_i,C_i,T_{i+1}$. Similarly, define the average value of $\mathbf{1}_{\Omega_i}\mathbf{1}_{\Omega_i^m}$ as
\begin{equation}
    \widehat{x}_i(m,l,\pmb{h}) \triangleq \mathbb{E}\left[\mathbf{1}_{\Omega_i}\mathbf{1}_{\Omega_i^m}|m_{i+1}=l,\mathcal{H}_i=\pmb{h}\right].\label{def:hatxml}
\end{equation}

\begin{figure*}[!t]
  \normalsize
    \begin{align}
      \overline{P}_{d}&=\sum_{m=1}^{M}w_m\lim_{N \to \infty} \frac{\mathbb{E}\left[\sum_{n=1}^{N}(Z_{n-1}+(T_n+C_n)\mathbf{1}_{\Omega_n}\mathbf{1}_{\Omega_n^m})\right]}{\mathbb{E}[\sum_{n=1}^{N}\mathbf{1}_{\Omega_n}\mathbf{1}_{\Omega_n^m}]}\label{CC4_0a}\\
      &=\sum_{m=1}^{M}w_m\lim_{N \to \infty} \frac{\sum_{n=1}^{N}\mathbb{E}\left[R_n^{m}(\Theta_n)\right]}{\sum_{n=1}^{N}\mathbb{E}[\mathbf{1}_{\Omega_n}\mathbf{1}_{\Omega_n^m}]}\label{CC4_a}\\
      &=\sum_{m=1}^{M}w_m\lim_{N \to \infty} \frac{\sum_{n=1}^{N}\mathbb{E}_{\mathcal{H}_n}\left[\mathbb{E}_{m_{n+1}}\left[\widehat{R}_n(m,m_{n+1},\mathcal{H}_n)|m_{n+1}\right]|\mathcal{H}_n\right]}{\sum_{n=1}^{N}\mathbb{E}[\mathbf{1}_{\Omega_n}\mathbf{1}_{\Omega_n^m}]}\label{CC4_b}\\
      &=\sum_{m=1}^{M}w_m\lim_{N \to \infty} \frac{\sum_{n=1}^{N}\mathbb{E}_{\mathcal{H}_n}\left[\mathbb{E}_{m_{n+1}}\left[\widehat{x}_n(m,m_{n+1},\mathcal{H}_n)\frac{\widehat{R}_n(m,m_{n+1},\mathcal{H}_n)}{\widehat{x}_n(m,m_{n+1},\mathcal{H}_n)}|m_{n+1}\right]|\mathcal{H}_n\right]}{\sum_{n=1}^{N}\mathbb{E}[\mathbf{1}_{\Omega_n}\mathbf{1}_{\Omega_n^m}]}\label{CC4_c}\\
      &\ge\sum_{m=1}^{M}w_m\lim_{N \to \infty} \frac{\sum_{n=1}^{N}\mathbb{E}_{\mathcal{H}_n}\left[\mathbb{E}_{m_{n+1}}\left[\widehat{x}_n(m,m_{n+1},\mathcal{H}_n)|m_{n+1}\right]R^{*}(m,\mathcal{H}_n)|\mathcal{H}_n\right]}{\sum_{n=1}^{N}\mathbb{E}[\mathbf{1}_{\Omega_n}\mathbf{1}_{\Omega_n^m}]}\label{CC4_d}\\
      &\ge\sum_{m=1}^{M}w_m\lim_{N \to \infty} \frac{\sum_{n=1}^{N}\mathbb{E}_{\mathcal{H}_n}\left[\mathbb{E}_{m_{n+1}}\left[\widehat{x}_n(m,m_{n+1},\mathcal{H}_n)|m_{n+1}\right]|\mathcal{H}_n\right]R_{\min}(m)}{\sum_{n=1}^{N}\mathbb{E}[\mathbf{1}_{\Omega_n}\mathbf{1}_{\Omega_n^m}]}\label{CC4_e}\\
      &=\sum_{m=1}^{M}w_mR_{\min}(m).\label{CC4_f}
    \end{align}
  \hrulefill
  \vspace*{2pt}
\end{figure*}

Then, we obtain the lower bound as shown in \eqref{CC4_0a}--\eqref{CC4_f}. In these inequalities, \eqref{CC4_a} and \eqref{CC4_b} follow from \eqref{def:Riml} and \eqref{def:hatRml}, respectively; in \eqref{CC4_d}, $R^*(m,\mathcal{H}_n)$ represents the minimum value of $\frac{\mathbb{E}_{m_{n+1}}\left[\widehat{R}_n(m,m_{n+1},\mathcal{H}_n)|m_{n+1}\right]}{\mathbb{E}_{m_{n+1}}\left[\widehat{x}_n(m,m_{n+1},\mathcal{H}_n)|m_{n+1}\right]}$; in \eqref{CC4_e}, $R_{\min}(m)$ denotes the minimum value of $R^*(m,\mathcal{H}_n)$ for all packets from source $m$ and the corresponding history, i.e., the minimum value across all $n$ and $\mathcal{H}_n$ from source $m$; \eqref{CC4_f} is derived from the relationship between $\widehat{x}_i$ and $\mathbf{1}_{\Omega_i}\mathbf{1}_{\Omega_i^m}$. These equations provide a lower bound for the average weighted PAoI under a continuously working sampler.

Note that in a continuously working sampler achieving $R_{\min}(m)$, $\Theta_n$ is determined by the conditional distribution $p(\Theta|m,\gamma)=p_{\Theta}(\Theta_i|T_i=\gamma,m_i=m,m_{i+1}=l,\mathcal{H}_n=\pmb{h})$ for fixed $m_{i+1}$ and \(\mathcal{H}_i\). Since $T_n$ is i.i.d. and the scheduler is random, if we directly apply the distribution achieving $R_{\min}$ to all packets without considering historical information, all inequalities in \eqref{CC4_a}-\eqref{CC4_f} become equalities. The new policy achieves the lower bound and is stationary.

Next, by converting the decision process into an MDP, it can be proven that the optimal policy is deterministic. Specifically, the sampler determines $\Theta_n$ based on the scheduler's decision $m$ and a function $g^m(T)$ of the previous data transmission time. The proof is based on the idea that for this MDP, we can find a deterministic policy with the same performance as any randomized policy. A similar proof technique can be found in our previous work \cite{jian2024opt}. This completes the proof.

\subsection{Proof of Corollary~\ref{cor:paoinp}}\label{proof:cor:paoinp}
Under the FMT sampler, the \(\theta_i\)'s are independent of each other. Thus, we have
\begin{align}
  &\lim_{N_m \to \infty}\frac{\sum_{k=1}^{N_m}\mathbb{E}[W_k^m]}{N_m}\\
  =&\lim_{N_m \to \infty} \frac{\sum_{k=1}^{N_m} \mathbb{E}[\max\{0, C_{i_k^m-1} - \theta_{i_k^m-1} - T_{i_k^m}\}]}{N_m}\\
  =&\mathbb{E}\left[\max\{0,{C}-\theta^m-T\}\right] = \bar W^m.
\end{align}
Then, we obtain
\begin{align}
  &\lim_{N \to \infty}\frac{\sum_{i=1}^{N} \mathbb{E}[Z_i]}{N}=\lim_{N \!\to \!\infty}\frac{\sum_{i=1}^{N} \mathbb{E}[T_i\!+\!W_i\!+\!\min\{C_i,\theta_i\}]}{N} \label{eq:Zi1}\\
  &=\mathbb{E}[T]+\sum_{m=1}^{M} f_m \lim_{N_m \!\to\! \infty}\frac{\sum_{k=1}^{N_m}\mathbb{E}[W_{i_k^m}+\min\{C_{i_k^m},\theta_{i_k^m}\}]}{N_m} \label{eq:Zi2}\\
  &=\mathbb{E}[T]+\sum_{m=1}^{M} f_m\left(\mathbb{E}[W^m]+\mathbb{E}[\min\{C,\theta^m\}]\right) \label{eq:Zi3}\\
  &=\sum_{m=1}^{M} f_m\left[\mathbb{E}[T]+ \mathbb{E}[W^m]+\mathbb{E}[\min\{C,\theta^m\}]\right]\\
  &=\sum_{m=1}^{M} f_m \bar Z^{m},
\end{align}
where Eqs.~\eqref{eq:Zi1}-\eqref{eq:Zi3} hold because with a random scheduler, $\theta_i = \theta^{m_{i+1}}$ follows the same distribution of $\theta^{m_{i}}$.

By substituting the above equations into \eqref{def:probnp}, we have
\begin{align}
  &\bar{P}_{(\xi_{R},\pi_{\text{FMT}})} = \sum_{m=1}^{M} w_m \left[\lim_{N_m \to \infty} \frac{\sum_{k=1}^{N_m} \mathbb{E}[P_k^m]}{N_m}\right] \label{proof:cor2_a}\\
  =&\sum_{m=1}^{M} w_m \left[\lim_{N_m \to \infty} \frac{\sum_{k=1}^{N_m} \mathbb{E}[X_k^m+T^m_k +W^m_k+C^m_k]}{N_m} \right]\label{proof:cor2_b} \\
  =&\sum_{m=1}^{M} w_m \left[\lim_{N \to \infty} \frac{\sum_{i=1}^{N} \mathbb{E}[Z_i]}{f_mN} + \mathbb{E}[T] + \mathbb{E}[C]+ \bar W^m\right] \label{proof:cor2_c}\\
  =&\sum_{m=1}^{M} w_m \left[\frac{\sum_{i=1}^{M} f_i \bar Z^{m}}{f_m} + \mathbb{E}[T] + \mathbb{E}[C] + \bar W^m\right] \label{proof:cor2_d}\\
  =&\sum_{m=1}^{M} \frac{w_{m}}{f_{m}} \sum_{i=1}^{M} f_{i} \bar Z^{m}\! +\! \sum_{m=1}^{M} w_{m} \bar W^m \!+\! \mathbb{E}\left[T\right] \!+\! \mathbb{E}\left[C\right].\label{proof:cor2_e}
\end{align}
The proof is complete.

\subsection{Proof of Corollary~\ref{cor:paoip}}\label{proof:cor:paoip}
Under the SMT sampler, the $\theta_i$'s are independent of each other. Let $\Omega_{n} = \{C_n \le \Theta_{n} + T_{n+1}\}$ denote the event that packet $n$ is delivered to the destination, and let $\Omega_{n}^m$ denote the event that packet $n$ comes from source $m$. We have  
\begin{align}
  &P(\bar{\Omega}^m) = \lim_{N \to \infty}\sum_{n=1}^N\frac{1}{N}\mathbb{E}\left[\mathbf{1}_{\Omega_{n}}|\mathbf{1}_{\Omega_{n}^m} = 1\right]\\
  =&\lim_{N \to \infty}\frac{\sum_{n=1}^{N}P(C_n\le g^m(T_n)+T_{n+1})}{N}\\
  =&P(C\le g^m(T)+\widehat{T}).
\end{align}
Then, we have
\begin{align}
  &\bar{P}_{(\xi_{R},\pi_{\text{SMT}})}= \sum_{m=1}^{M} w_m \left[\lim_{N_m \to \infty} \frac{\sum_{n=1}^{N_m} \mathbb{E}[P_{k_n}^m]}{N_m}\right] \\
  =&\sum_{m=1}^{M} w_m \left[\lim_{N_m \to \infty} \frac{\sum_{n=1}^{N_m} \mathbb{E}[X_{k_n}^m+T^m_{k_n} +C^m_{k_n}]}{N_m} \right]\\
  =&\sum_{m=1}^{M} w_m \left[ \lim_{N \to \infty}\frac{\sum_{n=1}^{N} \mathbb{E}[Z_{n-1}+(T_n+C_n)\mathbf{1}_{\Omega_{n}^m}\mathbf{1}_{\Omega_{n}}]}{\sum_{n=1}^N \mathbf{1}_{\Omega_{n}^m}\mathbf{1}_{\Omega_{n}}}\right]\\
  =& \sum_{m=1}^{M} w_m \frac{\mathbb{E}[Z]+\mathbb{E}[(T_n+C_n)\mathbf{1}_{\Omega_{n}^m}\mathbf{1}_{\Omega_{n}}]}{\mathbb{E}[\mathbf{1}_{\Omega_{n}^m}\mathbf{1}_{\Omega_{n}}]}\\
  =& \sum_{m=1}^{M} w_m \frac{\mathbb{E}[Z]+\mathbb{E}[\mathbb{E}[(T_n+C_n)\mathbf{1}_{\Omega_{n}^m}\mathbf{1}_{\Omega_{n}}|\mathbf{1}_{\Omega_{n}^m}]]} {\mathbb{E}[\mathbb{E}[\mathbf{1}_{\Omega_{n}^m}\mathbf{1}_{\Omega_{n}}|\mathbf{1}_{\Omega_{n}^m}]]}\\
  =& \sum_{m=1}^{M} w_m \frac{\mathbb{E}[Z]+f_m\mathbb{E}[(T_n+C_n)\mathbf{1}_{\bar \Omega^m}]} {f_m\mathbb{E}[\mathbf{1}_{\bar \Omega^m}]}\\
  =&\sum_{m=1}^{M} \frac{w_m}{P(\bar{\Omega}^m)} \left\{ \frac{\mathbb{E}[Z]}{f_m}+{\mathbb{E}[(T+C)\mathbf{1}_{\bar{\Omega}^m}]}\right\}
\end{align}
The proof is complete.

\subsection{Proof of Lemma~\ref{lem:optimalf}}\label{proof:lem:optimalf}
We use the Lagrange duality theory to solve problem \eqref{def:probnp2} given the sampling thresholds ${\theta}$. For any multipliers $\lambda_m \ge 0, m = 1, ..., M, \mu \ge 0$, the Lagrangian function of problem \eqref{def:probnp2} is defined as follows:
\begin{align*}
    \mathcal{L} = &\mathbb{E}[T] \sum_{m=1}^M \frac{w_m}{f_m} \!+\! \sum_{m=1}^M \frac{w_m}{f_m} \sum_{n=1}^M f_n \left(\bar W^n \!+\! \mathbb{E}\left[\min \left\{C, \theta^n\right\}\right]\right)\\
    & - \sum_{m=1}^M \lambda_m f_m + \mu\left(\sum_{m=1}^M f_m - 1\right).
\end{align*}
By applying the KKT optimality conditions to the Lagrangian function $\mathcal{L}$ and setting $\partial \mathcal{L}/\partial f_m = 0$, we obtain the KKT conditions as follows:
\begin{align}
&- \mathbb{E}[T]\frac{{w_m}}{{f_m}^2} + \left(\bar W^m + \mathbb{E}[\min \{ C, \theta^m \} ]\right) \sum_{n = 1}^M \frac{{w_n}}{{f_n}} \nonumber\\
&\quad- \frac{{w_m}}{{f_m}^2}\sum_{n = 1}^M f_n\left(\bar W^n + \mathbb{E}[\min \{ C, \theta^n \} ]\right) + \lambda_m  + \mu = 0, \label{equ:proofa}
\end{align}
and $\sum_{m=1}^M f_m = 1,\lambda_m \ge 0,f_m > 0,\lambda_m f_m = 0$.

To compute the optimal solution, we re-arrange the left and right sides of equation \eqref{equ:proofa} and multiply both sides by $f_m$ to get the following equation:
\begin{align}
  &f_m \left(\bar W^m + \mathbb{E}[\min \{ C, \theta^m \} ]\right) \sum_{n = 1}^M \frac{{w_n}}{{f_n}} + \mu f_m \nonumber\\
  =& \frac{{w_m}}{{f_m}} \sum_{n = 1}^M f_n\left(\bar W^n + \mathbb{E}[\min \{ C, \theta^n \} ]\right) + \mathbb{E}[T]\frac{{w_m}}{{f_m}}, \label{equ:prooff}
\end{align}
Note that the term $\lambda_m$ is eliminated due to $\lambda_mf_m=0$. Next, summing both sides of the equation over $m$ and performing algebraic manipulations, we obtain
\begin{equation}
  \mu = \mathbb{E}[T] \sum_{m = 1}^M \frac{{w_m}}{{f_m}}.
\end{equation}
Substituting this into \eqref{equ:prooff}, we get the following expression:
\begin{align*}
  f_m \left(\bar W^m + \mathbb{E}[\min \{ C, \theta^m \} ]\right) \sum_{n = 1}^M \frac{{w_n}}{{f_n}} + f_m \mathbb{E}[T] \sum_{n = 1}^M \frac{{w_n}}{{f_n}}& \\
  = \frac{{w_m}}{{f_m}} \sum_{n = 1}^M f_n\left(\bar W^n + \mathbb{E}[\min \{ C, \theta^n \} ]\right) + \mathbb{E}[T]\frac{{w_m}}{{f_m}}&.
\end{align*}
By denoting $A = \sum_{n = 1}^M f_n\left(\bar W^n + \mathbb{E}[\min \{ C, \theta^n \} ]\right)$ and $B = \sum_{n = 1}^M \frac{{w_n}}{{f_n}}$, we can get
\begin{align*}
	f_m=\sqrt{\frac{A w_m}{B \left(\bar W^m+ \mathbb{E}\left[\min \left\{C, \theta^m\right\}\right]+\mathbb{E}[T]\right)}}.
\end{align*}
As $A$ and $B$ are independent to the specific source index $m$, by normalizing the above equation such that $\sum_{m=1}^M f_m = 1$, we obtain the expression for the optimal solution as \eqref{optimalf_np}. The proof is complete.

\subsection{Proof of Lemma~\ref{Lem:pc}}\label{proof:Lem:pc}
Let $c_1>0$, and let the solution of problem~\eqref{def:probp_tp} be given by $g^{m}_1(\cdot)$ for $c=c_1$. Now for some $c_2>c_1$, we have
\begin{align*}
  p(c_1)&=\min_{g^m(\cdot)}&F(g^m| {\bm{f}} , \bm{g}_m^-) - c_1P(\bar{\Omega}^m)\\
  &\ge\min_{g^m(\cdot)}&F(g^m| {\bm{f}} , \bm{g}_m^-) - c_2P(\bar{\Omega}^m)\\
  &\ge p(c_2),
\end{align*}
where the last inequality follows since $g^{m}_1(\cdot)$ is also feasible in problem~\eqref{def:probp_tp} for $c=c_2$.

Next, note that both problem~\eqref{def:probp_tp} and \eqref{def:probp2_gm} have the same feasible set. In addition, if $p(c)=0$, then the objective function \eqref{def:probp2_gm} satisfies
\begin{align*}
  \frac{F(g^m| {\bm{f}} , \bm{g}_m^-)}{P(\bar{\Omega}^m)}=c.
\end{align*}
Hence, the objective function of \eqref{def:probp2_gm} is minimized by finding the minimum $c\ge 0$ such that $p(c)=0$. Finally, by the first part of lemma, there can only be one such $c$, which we denote $c^*$.

\bibliographystyle{IEEEtran}

\bibliography{IEEEabrv,refer}

\vfill

\end{document}